\declaretheorem[numbered=no]{definition}
\declaretheorem{proposition}
\journal{Signal Processing}
\date{}
\begin{document}

%
%

\begin{frontmatter}
	\title{ On Hilbert Transform, Analytic Signal, and Modulation Analysis for Signals over Graphs}
	
	\author{Arun Venkitaraman}
	\ead{arunv@kth.se}
	
	\author{Saikat Chatterjee}
	\ead{sach@kth.se}
	\date{}
	\author{Peter H{\"a}ndel}
	\ead{ph@kth.se}
	\address{Department of Information Science and Engineering,\\
		School of Electrical Engineering and ACCESS Linnaeus Center\\            
		KTH Royal Institute of Technology,  
		SE-100 44 Stockholm, Sweden  .}
	%
	%
	%
	%
	%
	%
	%
\begin{abstract}
We propose Hilbert transform and analytic signal construction for signals over graphs. This is motivated by the popularity of Hilbert transform, analytic signal, and modulation analysis in conventional signal processing, and the observation that complementary insight is often obtained by viewing conventional signals in the graph setting. 
Our definitions of Hilbert transform and analytic signal use a conjugate-symmetry-like property exhibited by the graph Fourier transform (GFT), resulting in a 'one-sided'  spectrum for the graph analytic signal. The resulting graph Hilbert transform is shown to possess many interesting mathematical properties and also exhibit the ability to highlight anomalies/discontinuities in the graph signal and the nodes across which signal discontinuities occur.
Using the graph analytic signal, we further define amplitude, phase, and frequency modulations for a graph signal. 
We illustrate the proposed concepts by showing applications to synthesized and real-world signals. For example, we show that the graph Hilbert transform can indicate presence of anomalies and that graph analytic signal, and associated amplitude and frequency modulations reveal complementary information in speech signals.
\end{abstract}

\begin{keyword}
Graph signal, analytic signal, Hilbert transform, demodulation, anomaly detection.
 \end{keyword}
\end{frontmatter}

\section{Introduction}
\label{intro}
The analysis of data over networks or graphs poses unique challenges to the signal processing community, since data must be seen with due regard to the connections between  various data points or nodes of a graph \cite{Newman,Sandry3,Shuman}. Given the wealth of techniques and models in conventional signal analysis, it is desirable to extend existing concepts to signals over graphs\cite{pp_graph2,deeplearninggraphwavelets,multiscalegraphs}. 
The collective efforts along this line of thought have led to the emergence of the notion of signal processing over graphs \cite{Shuman,Sandry1,Sandry2,Sandry3}. 
%
In this paper, we generalize the concepts of Hilbert transform, analytic signal, and modulation analysis to signals over graphs. This is motivated by two observations. Firstly, Hilbert transform, analytic signal, and associated modulation analysis have been used extensively for one-dimensional (1D) and two-dimensional (2D) signals in various applications\cite{Cohen,Cusmariu,Zayed,Guanlei,Sarkar,Phi_AS,monogenic,2D_AS_1,Bernstein2014}. By extending modulation analysis to graphs, we endeavour to provide similar tools for signals over graphs. Secondly, viewing of 1D /2D signals in a graph setting has been shown to give additional insight into the signals, leading to improved performance in tasks such as compression and denoising\cite{Shuman,ArunEusipco15}. 
\begin{figure}[]
	\vspace{-.in}
	{\centering
		\subfigure[\hspace{-.0in}]{
			\includegraphics[width=1.8in]{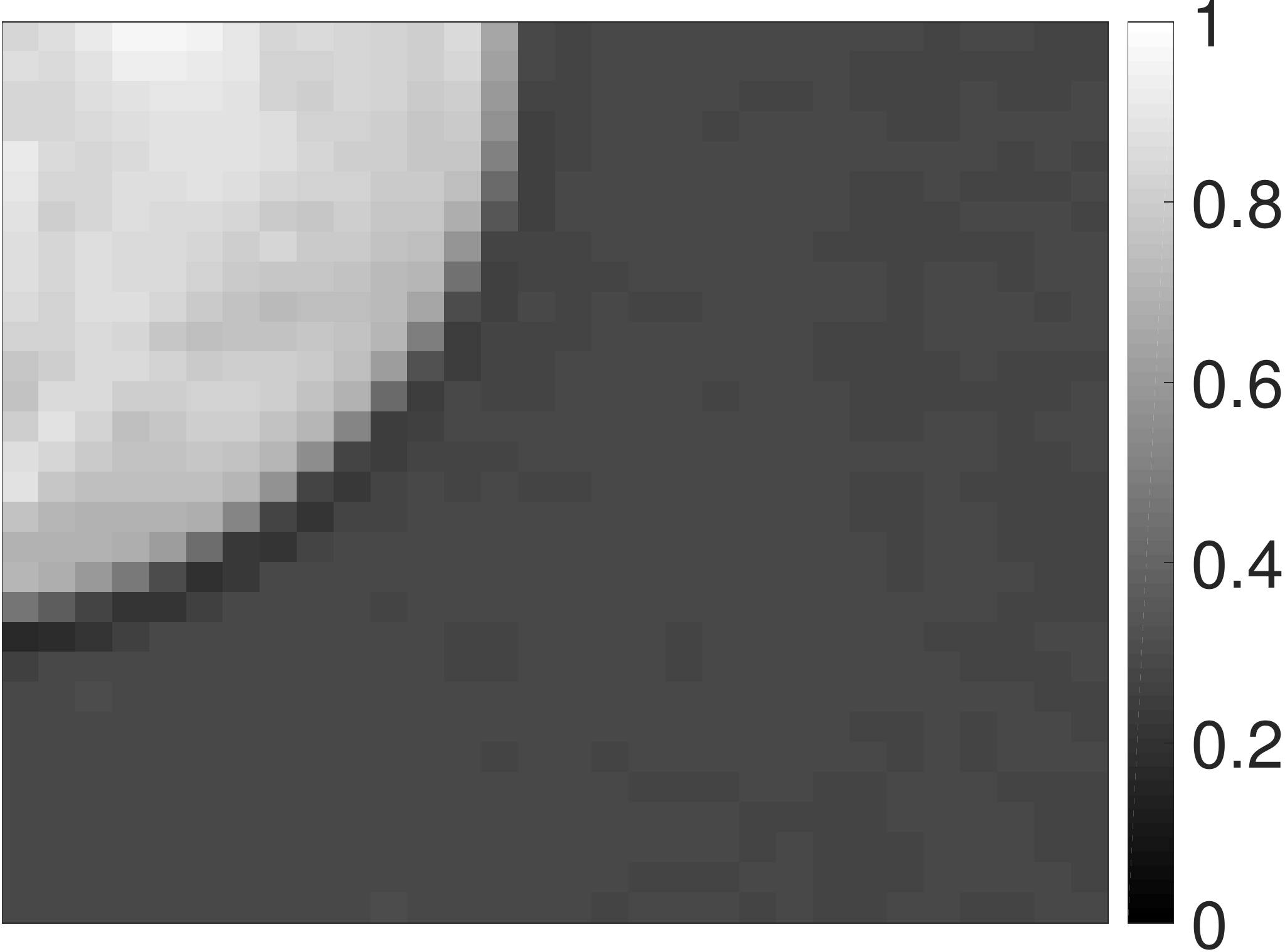}}
		\hspace{-.in}\subfigure[\hspace{-.0in}]{
			\includegraphics[width=1.8in]{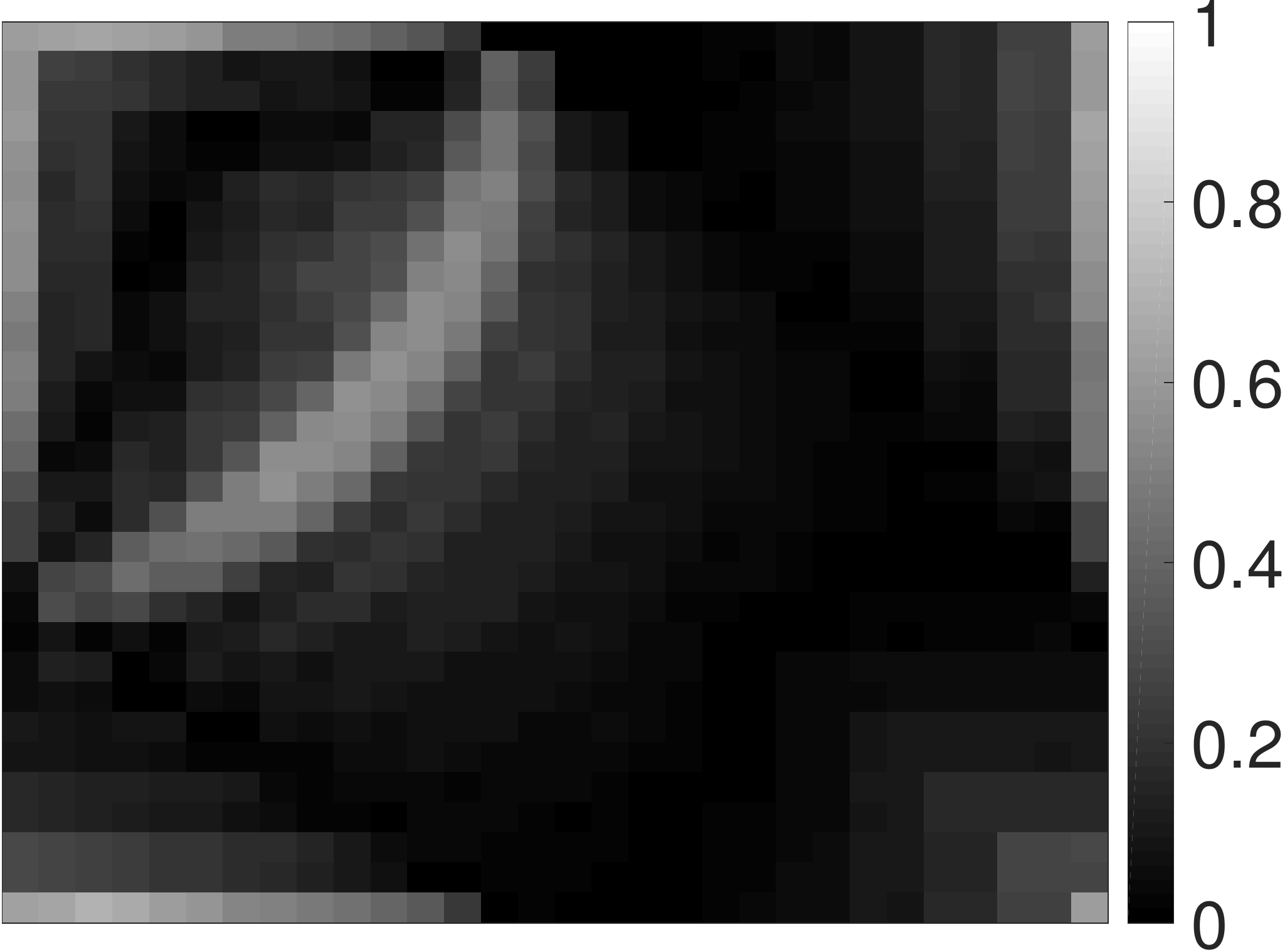}}
		\caption{Anomaly highlighting behavior of the graph Hilbert transform for 2D image signal graph. (a) image, and (b) graph Hilbert transform. We observe that the graph Hilbert transform highlights edges or sudden jumps across connected pixels.}
		\label{2Dedge_begin}
	}
\end{figure}
\subsection{Review of literature}
Some of the early works in graph signal processing include windowed Fourier transforms \cite{windowedGFT},  filterbanks\cite{ Narang2013,TAY201766}, wavelet transforms and multiresolution representations for graphs\cite{Coifman2006, Ganesan,Hammond2011,Wagner2,parseval_wavelets_gsp}. A number of strategies for efficient sampling of signals over graphs have been proposed \cite{graphsamp1,graphsamp3,graphsamp6,anis,chensamp,WANG2016119,PUY2016}. The notions of stationarity and power spectral density have also been considered extensively for signals over graphs \cite{statgraph1,statgraph2,statgraph4}.
 %
A parametric dictionary learning approach for graph signals was proposed by Thanou et al.\cite{Thanou2014}. In \cite{vertexfrequency}, Shuman et al. generalized the notion of time-frequency analysis to the graph setting using windowed graph Fourier transforms. Shahid et al. proposed variants of principal component analysis for graph signals  and developed scalable and efficient algorithms for recovery of low-rank matrices\cite{graphPCA1,graphPCA2}. Tremblay et al. proposed an efficient spectral clustering algorithm based on graph signal filtering\cite{Tremblay}. 
In \cite{benzisong}, Benzi et al. developed a song recommendation system based non-negative matrix factorization and graph total variation. Shuman et al. proposed a multi-scale pyramid transform on graphs that generates a multiresolution of both the graph and the signal\cite{pyramidgraph}. Segarra et al. proposed  convex optimization based approaches for blind identification of graph filters \cite{blinddeconvgraph1,blinddeconvgraph2}. Chen et al. considered signal recovery on graphs based on total-variation minimization formulated as a convex optimization problem\cite{chen1}. Sakiyama et al. proposed spectral graph wavelets and filterbanks constructed as sum of sinusoids in spectral domain with low approximation error\cite{sakiyama}. Deutsch et al. showed the application of spectral graph wavelets to manifold denoising\cite{deutsch}. A trilateral filter based denoising scheme was proposed by Onuki et al. \cite{onuki}. Multirate signal processing concepts including $M$-channel filter banks were extended to the graph setting by Teke and Vaidyanathan\cite{pp_graph2}. Kernel regression approaches for reconstruction of graph signals have also been recently proposed in the framework of reproducing kernel Hilbert spaces \cite{kergraph1}. Mendes et al. proposed a general framework for transforms for graph signals where they considered extension of tomograms to the graph signal setting \cite{tomograms_gsp}. A fast algorithm for implementation of vertex-frequency representations for graphs was developed by Jestrovic et al  in \cite{JESTROVIC2017483}. They also  developed an optimized vertex-frequency representation for investigating brain characteristics during consecutive swallows\cite{JESTROVIC2017113}. Kotzagiannidis and Dragotti extended the notion of finite-rate-of-innovations to circulant graphs \cite{fri_gsp} and also considered splines and wavelets for circulant graphs \cite{splines_gsp}.
 
  \subsection{Our contributions}
  \footnote{The codes related to our article may be found at https://www.kth.se/ise/research/reproducibleresearch-1.433797.}
In this paper, we propose definitions for the Hilbert transform and analytic signal for real signals over graphs\footnote{Part of this work has appeared in the Proceedings of the Sampling Theory and Applications Conference, 2015 \cite{ArunSampta15}.}.
We show that a real graph signal with a real-valued adjacency matrix may be represented using fewer number of GFT coefficients than the signal length, akin to the `one-sided' spectrum for 1D signals. We generalize the Hilbert transform and analytic signal construction \cite{Gabor} to graph signals by using the conjugate-symmetry-like property of the GFT basis. 
We also show that graph Hilbert transform and graph analytic signal inherit properties such as isometry, phase-shifting, and orthogonality from their 1D counterparts. 
 We discuss how the graph Hilbert transform does not possess a Bedrosian-type property in general unlike its conventional counterpart.
 As a natural consequence of the graph Hilbert transform construction, we propose amplitude, phase, and frequency modulations for graph signals.
We illustrate the concepts with applications to synthesized and real-world signals. 
Our experiments show that graph Hilbert transform can reveal edge connections and presence of anomalies in many graphs of interest. 
As an applicaton of the graph amplitude and frequency modulations, we also demonstrate that viewing the speech signal as a graph signal brings improves speaker classification performance. We summarize the key similarities and differences between the conventional Hilbert transform/analytic signal and our graph Hilbert transform/graph analytic signal in Table \ref{summary}. 

\begin{table}
	\centering
	\begin{tabular}{|c|c|c|c|}
		\hline
		Hilbert transform/analytic signal&  Graph Hilbert transform/analytic signal\\
		\hline
		Linear and shift invariant & Linear and graph-shift-invariant \\ \hline
		Highlights signal edges& Highlights  graph anomalies\\ \hline
		Quadrature phase-shifting& Generalized\\
		& quadrature phase-shifting \\\hline
		Changes with permutation of samples& Invariant to  node permutations\\\hline
		Bedrosian property & Lacks Bedrosian property \\\hline
		One-sided spectrum & Near one-sided spectrum\\  \hline
		Special case of graph Hilbert  & Added insight into 1D signals
		\\ 
		transform/graph analytic signal&\\
		\hline
\end{tabular}
	\hspace{.1in}
	\caption{Key similarities and differences between conventional Hilbert transform/ analytic signal and proposed graph Hilbert transform/analytic signal.}
	\label{summary}
	\vspace{-.in}
\end{table}
\vspace{-0.in}
\section{Preliminaries}
\subsection{Graph signal processing}
Let ${\bf x}\in\mathbb{R}^N$ be a real signal on the graph $G=(\mathcal{V},{\bf A})$, where $\mathcal{V}$ and $\mathbf{A}\in\mathbb{R}^{N\times N}$ denote the node set and the adjacency matrix, respectively. 
Then, the GFT of $\mathbf{x}$ is defined as\cite{Sandry1,Sandry2}:
\begin{equation}
\hat{\mathbf{ x}} \triangleq [\hat{x}(1), \hat{x}(2) , \hdots , \hat{x}(i), \hdots , \hat{x}(N)]^{\top}= \mathbf{V}^{-1}\mathbf{x},\nonumber
\end{equation}
where $\mathbf{V}$ denotes the matrix of generalized eigenvectors as columns (hereafter refered to simply as the eigenvectors) such that the Jordan decomposition of $\mathbf{A}$ is given by
$\mathbf{A=VJV}^{-1}$,
 and $\mathbf{J}$ is the matrix of Jordan blocks of the eigenvalues of $\mathbf{A}$. In the case when $\mathbf{A}$ is diagonalizable, $\mathbf{J}$ becomes the diagonal eigenvalue matrix $\mathbf{J}=\mbox{diag}(\lambda_1,\lambda_2,\cdots,\lambda_N)$. A periodic 1D signal may be viewed as a graph signal $\mathbf{x}$ with adjacency matrix 
\begin{equation}
\mathbf{A}=\mathbf{C} \triangleq \small\scriptstyle\left(\begin{array}{cccccc}
0& 1& 0&\cdots& 0\\
0&0&1&\cdots&0\\
\vdots&\vdots&\vdots &\vdots &\\
1&0&0&\cdots&0
\end{array}\right),\nonumber
\end{equation}
 and the GFT coincides with the discrete Fourier transform (DFT)\cite{Sandry3,gray_circulant}. 
 
 The smoothness of a graph signal is often measured in terms of the following mean-squared cost:
$\mbox{MS}_g(\mathbf{x})\displaystyle=\frac{\|\mathbf{x-Ax}\|_2^2}{\|\mathbf{x}\|_2^2}$.
A graph signal with low $\mbox{MS}_g$ is smooth over the graph: connected nodes have signal values close to each other, stronger the edge, closer the values. 
A unit shift of $\mathbf{x}$ over the graph is defined as $\mathbf{Ax}$, generalizing unit delay for the 1D graphs. 

A linear shift-invariant filter $\mathbf{H}$ on graph is defined as a polynomial $h(\cdot)$ of the adjacency matrix, such that
$\mathbf{H}=\sum_{l=0}^Lh_l\mathbf{A}^l=h(\mathbf{A})$,
where $h_l\in\mathbb{R}$ and $L\leq N$. Such a filter also follows the convolution property: the GFT coefficients of filtered graph signal 
$\mathbf{y}=\mathbf{H}\mathbf{x}$ are obtained by scaling the GFT coefficients of  input $\mathbf{x}$: $\hat{\mathbf{y}}=h(\mathbf{J})\hat{\mathbf{x}}$.
Graph filters are used in spectral analysis and processing of graph signals and have been applied in various applications \cite{Sandry1,Sandry2,pp_graph2,blinddeconvgraph1,armagraph}.
\vspace{-0.in}
\subsection{The conventional analytic signal}
\indent Let $\hat x(\omega)$ denote the DFT of the real 1D signal $\mathbf{x}$ evaluated at frequency $\omega$. Then, the discrete analytic signal of $\mathbf{x}$, denoted by $\mathbf{x}_{a,c}$, has the following frequency-domain definition \cite{Oppenheim, Gold, Gabor}:
\begin{equation}
\label{AS}
\hat x_{a,c}(\omega)=\small\begin{cases}
2\hat x(\omega), \,\, \omega\in\left\{\textstyle\frac{2\pi}{N}, \cdots, \pi-\frac{2\pi}{N}\right\}\\
\hat x(\omega), \,\, \omega\in\left\{0,\pi\right\}\\
0, \,\,  \omega\in\left\{\pi+\frac{2\pi}{N}, \cdots, \frac{2\pi(N-1)}{N}\right\}.
\end{cases}
\end{equation}
Taking the inverse DFT on both sides of (\ref{AS}), we get that $\mathbf{x}_{a,c}=\mathbf{x}+\mathrm{j} \mathbf{x}_{h,c}$, where $\mathrm{j}=\sqrt{-1}$ and $\mathbf{x}_{h,c}$ is known as the discrete Hilbert transform of $\mathbf{x}$ \cite{Oppenheim}. The graph Hilbert transform has the following frequency-domain specification:
\begin{equation}
\label{AS2}
\hat x_{h,c}(\omega)=\small\begin{cases}
-\mathrm{j}\hat x(\omega), \,\, \omega\in\left\{\textstyle\frac{2\pi}{N}, \cdots, \pi-\frac{2\pi}{N}\right\}\\
\hat x(\omega), \,\, \omega\in\left\{0,\pi\right\}\\
+\mathrm{j}\hat x(\omega), \,\,  \omega\in\left\{\pi+\frac{2\pi}{N}, \cdots, \frac{2\pi(N-1)}{N}\right\}.
\end{cases}
\end{equation}

\section{Graph Analytic Signal}
\label{sec:gas}
We next define an analytic signal for signals over graphs. In our analysis, we make the following assumptions:
\begin{itemize}
\item[(1)]
$\mathbf{A}$ is real and asymmetric with atleast one conjugate-pair of eigenvalues. 
 \item[(2)] The Jordan (or eigen) decomposition of $\mathbf{A}$ is such that $\mathbf{J}$ has Jordan blocks arranged in the ascending order of phase angle of the eigenvalues from 0 to $2\pi$. If multiple eigenvalues with same phase angle occur, we order them in the descending order of their magnitude. 
\end{itemize}
We recall that the eigenvalues of a real-valued matrix, and the corresponding eigenvectors or generalized eigenvectors  are either real-valued or occur in complex-conjugate pairs\cite{Horn}.
Let $K_1$ and $K_2$ denote the number of real-valued positive and negative eigenvalues including repeated eigenvalues of $\mathbf{A}$, respectively, and $K=K_1+K_2$. Let us define the sets:
\begin{align}
\Gamma_1&=\left\{1, \cdots, K_1\right\} \hfill (\mbox{positive real eigenvalues}),\nonumber\\
\Gamma_2&=\left\{ K_1+1, \cdots, K_1+{\scriptstyle\frac{N-K}{2}}\right\} \,\,(\mbox{eigenvalues with phase angle in } (0,\pi)),\nonumber\\
\Gamma_3&=\left\{K_1+{\scriptstyle\frac{N-K}{2}}+1, \cdots, {\scriptstyle\frac{N+K}{2}}\right\}(\mbox{negative real  eigenvalues}),\,\,\nonumber\\
\Gamma_4&=\left\{{\scriptstyle\frac{N+K}{2}}+1, \cdots, N\right\}\,\,(\mbox{eigenvalues with phase angle in } (\pi,2\pi)),\nonumber
\end{align}
 and denote the vector spaces spanned by the corresponding eigenvectors by $\mathbf{V}_1$, $\mathbf{V}_2$, $\mathbf{V}_3$, and $\mathbf{V}_4$, respectively. For example, $\mathbf{V}_1$ is the space spanned by the eigenvectors related to $\Gamma_1$. On ordering as per Assumption 2, we have that for every $i$th eigenvector $\mathbf{A}$ such that $i\in\Gamma_2$, there exists an eigenvector indexed by $i\in\Gamma_4$ that share a complex-conjugate relationship, that is,
 	\begin{equation}
 	\mathbf{v}_i=
 	%
 	\mathbf{v}^*_{i'}, \,\,\,  i\in\Gamma_2,\,\,  i'\in\Gamma_4. \nonumber
 \end{equation}
 	In particular, for the case of all distinct eigenvalues, we have that 
 \begin{equation}
 \mathbf{v}_i=
 %
 \mathbf{v}^*_{(N-i+K_1+1)}, \,  i\in\Gamma_2. \nonumber
 \end{equation} 
Then, as a consequence of the complex-conjugate relationship, we have:
\begin{equation}
\hat x(i)=
\hat x^*(i'), \,\,\,  i\in\Gamma_2,\,i'\in\Gamma_4. \label{conj1}
\end{equation}
For real-valued $\mathbf{A}$, $N$ and $K$ are always of the same parity (odd or even). In the case of 1D signals, (\ref{conj1}) reduces to the conjugate-symmetry property of the DFT\cite{Oppenheim}. Equation (\ref{conj1}) indicates that a real graph signal can be represented using $\theta$ GFT coefficients, where $\theta=|\Gamma_1|+|\Gamma_2|+|\Gamma_3|=(N+K)/2$, and $|\Gamma|$ denotes the cardinality of the set $\Gamma$. For $K\ll N$, $\theta\approx N/2$. 
We note that (\ref{conj1}) holds only if $\mathbf{x}$ is real, which means that a graph signal which does not satisfy (\ref{conj1}) is necessarily complex-valued. Motivated by (\ref{conj1}) and conventional analytic signal construction, we next define the graph analytic signal  and graph Hilbert transform.
\begin{definition}
{
We define the graph analytic signal of $\mathbf{x}$ as  
$\mathbf{x}_a~=~\mathbf{V}\hat{ \mathbf{x}}_a$,
 where 
\begin{equation} 
 \hat{{x}}_a(i)~=\small~\begin{cases}2\hat x(i), &i\in\Gamma_2\nonumber\\
\hat x(i), &i\in\Gamma_1\cup\Gamma_3\nonumber\\
0,&i\in \Gamma_4
\end{cases}.
\end{equation}
As a consequence of the 'one-sidedness' of the GFT spectrum, we have that $\mathbf{x}_a$ is complex and hence, is expressible as $\mathbf{x}_a=\mathbf{x}+ \mathrm{j}\,\mathbf{x}_h$. We define $\mathbf{x}_h$ as the graph Hilbert transform of $\mathbf{x}$ such that 
\begin{equation}
\label{graphHT}
\mathrm{j}\hat{{x}}_h(i)=\small\begin{cases} +\hat x(i), &i\in\Gamma_2\\
0, &i\in\Gamma_1\cup\Gamma_3\\
-\hat x(i),&i\in \Gamma_4
\end{cases}.
\end{equation}
}
\end{definition}
On setting $\mathbf{A}=\mathbf{C}$, we observe that (\ref{graphHT}) reduces to the conventional Hilbert transform/analytic signal definitions given by (\ref{AS}) and (\ref{AS2}), that is, $\mathbf{x}_a=\mathbf{x}_{a,c}$ and $\mathbf{x}_h=\mathbf{x}_{h,c}$ since the $i$th graph frequency is equal to $e^{\mathrm{j}\omega_i}$ where $\omega_i\in\left\{0,\textstyle\frac{2\pi}{N}, \cdots, \pi\right\}$. This corresponds to $\Gamma_1=\{1\}$, $\Gamma_2=\{2,\cdots,N/2-1\}$, $\Gamma_3=\{N/2\}$, and $\Gamma_3=\{N/2+1,\cdots,N-1\}$ for even $N$. For odd $N$, this corresponds to $\Gamma_1=\{1\}$, $\Gamma_2=\{2,\cdots,(N+1)/2\}$, $\Gamma_3=\{\}$, and $\Gamma_4=\{(N+1)/2+1,\cdots,N-1\}$.

As an illustration of the graph analytic signal construction,  consider a graph with an adjacency matrix with eigenvalues distributed according to Figure \ref{GHTschematic}(a). Let us consider a signal $\mathbf{x}$ having unit GFT magnitude for all the graph frequencies. Then, graph Hilbert transform of $\mathbf{x}$ has the GFT spectrum shown in Figure \ref{GHTschematic}(c) since $\Gamma_1=\{1\}$, $\Gamma_2=\{2, 3 ,4\}$, $\Gamma_3=\{5\}$, and $\Gamma_4=\{6, 7 ,8\}$. The corresponding graph analytic signal has the GFT spectrum shown in Figure \ref{GHTschematic}(d).
\begin{figure}[t]
{\vspace{-0.in}
\centering
\subfigure[]{
\includegraphics[width=1.8in]{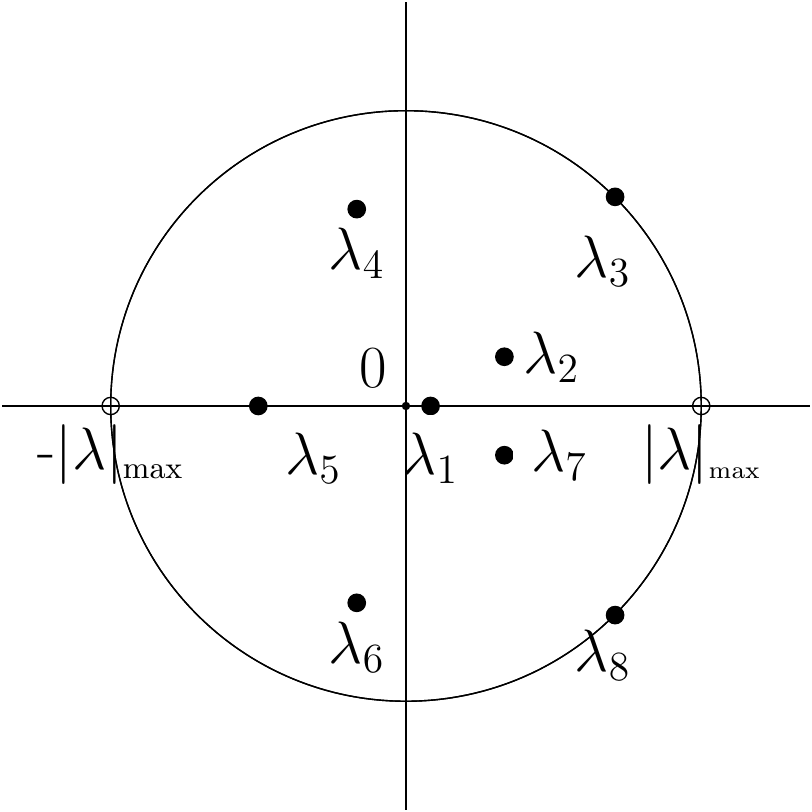}
}
\subfigure[]{
\includegraphics[width=2in]{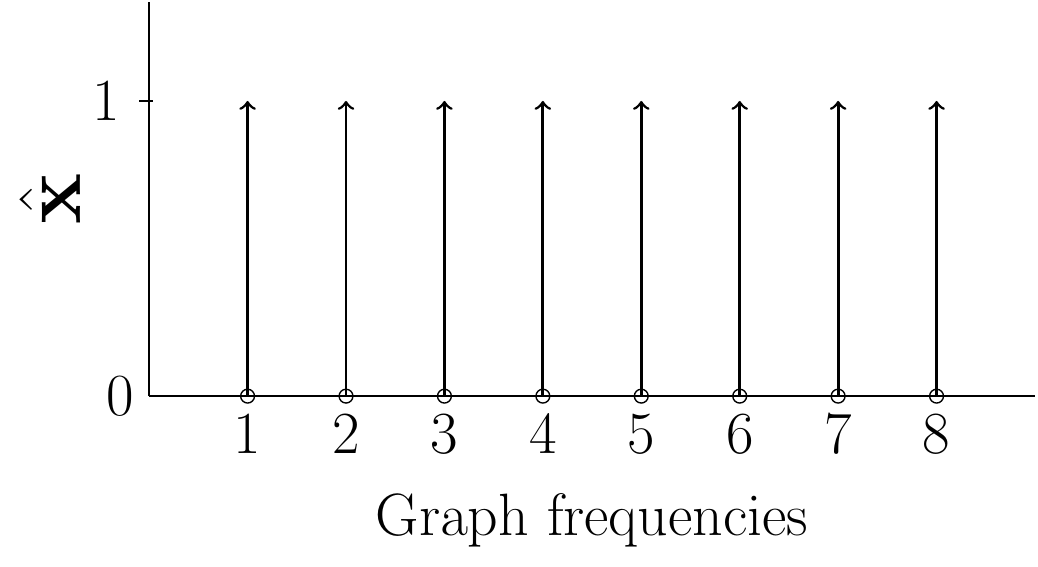}
}\hspace{-.0in}\\
\subfigure[]{
\includegraphics[width=1.8in]{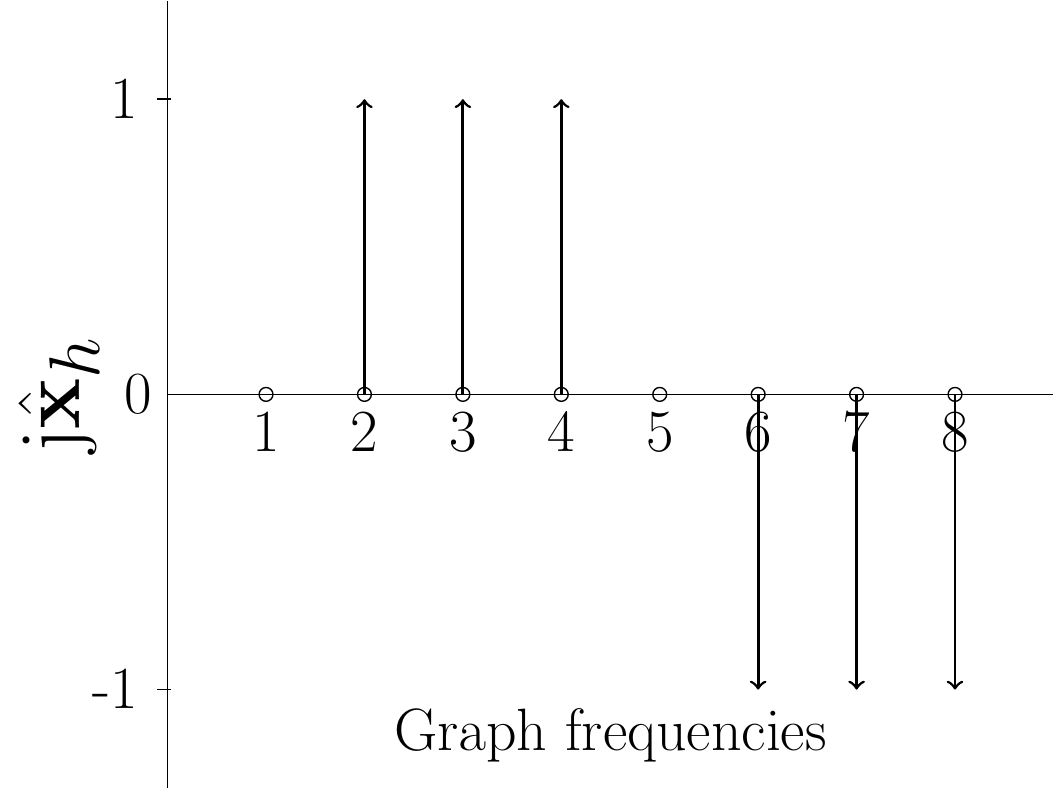}
}\hspace{-.1in}
\subfigure[]{
\includegraphics[width=2in]{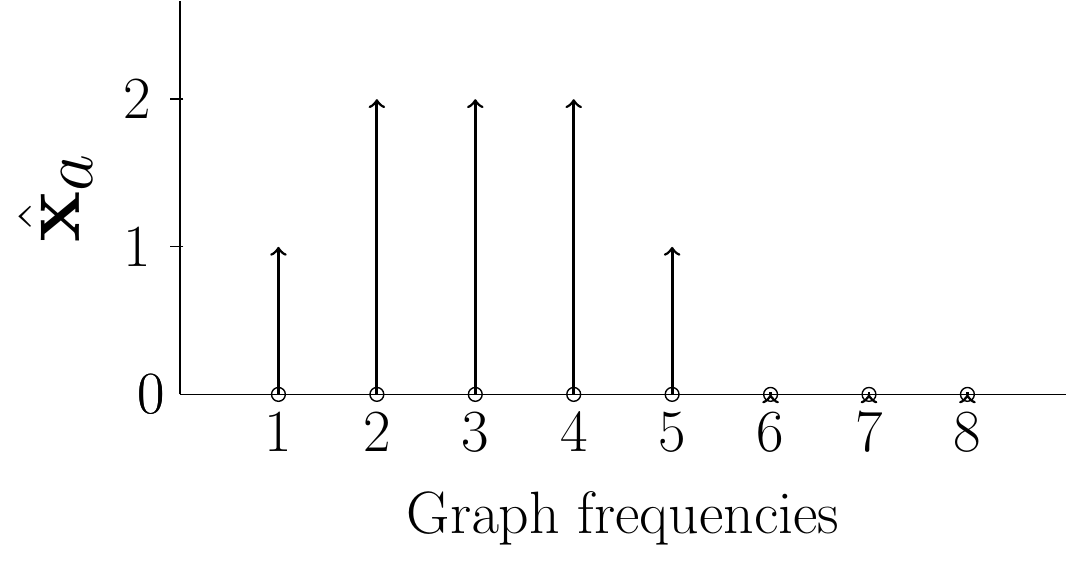}
}
\caption{ Illustration of graph analytic signal and graph Hilbert transform. (a) Eigenvalues of $\mathbf{A}$, (b) GFT of signal $\mathbf{x}$, (c) GFT of $\mathrm{j}\mathbf{x}_h$, and (d) GFT of $\mathbf{x}_a$. In this case,  $\Gamma_1=\{1\}$, $\Gamma_2=\{2, 3 ,4\}$, $\Gamma_3=\{5\}$, and $\Gamma_4=\{6, 7 ,8\}$.}
\label{GHTschematic}}
\end{figure}

\subsection{One-sided spectrum of the graph analytic signal}
The exact number of nonzero values in the graph analytic signal depends on the adjacency matrix $\mathbf{A}$. In the case when all the eigenvalues of $\mathbf{A}$ are complex $(K=0)$, the number of non-zero coefficients in $\hat{\mathbf{x}}_a$ is exactly one half of the total resulting in a one-sided spectrum, that is, $\theta=N/2$. We list the $\theta$ values for the 1D graph and random graphs Table \ref{onesidedtable}. For asymmetric matrices with entries drawn from independently and identical distributed (IID) mean zero unit variance Gaussian distribution $\mathcal{N}(0,1)$, the fraction of real eigenvalues asymptotically tends to zero \cite{Adelman1}. This was also shown to hold experimentally for matrices with independent and identically districuted entries from the uniform distribution over $[-1,1]$: $\mathcal{U}[-1,1]$, and Bernoulli $\{-1,1\}$ entries\cite{Adelman1,Tao}.  We note here that adjacency matrix with Bernoulli entries represents the Erd\H{o}s R\'{e}nyi model for small-world graphs\footnote{The Erd\H{o}s R\'{e}nyi model is a popular model for  small-world random graphs \cite{ErdosRenyi}.}. This implies that the corresponding graphs with adjacency matrices drawn from these distributions asymptotically have one-sided graph analytic signal spectrum. Since the Gaussian random matrix is a good approximation to general random matrices in terms of spectral properties, one can conclude that, on an average, asymmetric matrices have mostly  complex-valued eigenvalues .  This in turn indicates that most directed graphs have graph analytic signal with approximately 'one-sided' spectrum. 
%
\vspace{-0.in}
\begin{table}
	\centering
	\begin{tabular}{|c|c|c|c|}
		\hline
		Graph & Number of & $\theta$\\
		&real eigenvalues &\\
		\hline
		1D graph, odd $N$ & 1 &$\frac{N+1}{2}$\\
		1D graph, even $N$ & 2 &$\frac{N+2}{2}$\\
		Entries drawn  $\mathcal{N}(0,1)$ or & $\sqrt{\frac{2N}{\pi}}$ (asymptotic & $\frac{N}{2}+\sqrt{\frac{N}{2\pi}}$\\
		$\mathcal{U}[-1,1]$or Bernoulli $\{-1,1\}$&expected value) &\\
		\hline
	\end{tabular}
	\hspace{.1in}
	\caption{$\theta$ value for some graphs of interest.}
	\label{onesidedtable}
	\vspace{-.0in}
\end{table}

\subsection{Discussions on graph analytic signal and graph Hilbert transform}
\label{GHT_sec}
We next show that the graph Hilbert transform $\mathbf{x}_h$ of a real graph signal $\mathbf{x}$ is real. Since 
$\mathbf{x}_h=\mathbf{V} \hat{\mathbf{x}}_h$, we have that
\begin{align}
\label{x_h_real}
\mathrm{j}{ \mathbf{x}}_h&=\textstyle \sum_{i\in\Gamma_2}\mathrm{j}\hat x_h(i) \mathbf{v}_i+\textstyle \sum_{i\in\Gamma_4}\mathrm{j}\hat x_h(i) \mathbf{v}_i =\textstyle \sum_{i\in\Gamma_2}\hat x(i) \mathbf{v}_i-\textstyle \sum_{i\in\Gamma_4}\hat x(i) \mathbf{v}_i \\
&=\textstyle \sum_{i\in\Gamma_2}\left(\hat x(i) \mathbf{v}_i-\hat x^*(i) \mathbf{v}^*_i\right)=2\,\mathrm{j}\Im\left(\textstyle \sum_{i\in\Gamma_2} \hat x(i) \mathbf{v}_i\right),\nonumber
\end{align}
where $\mathbf{v}_i$ denotes the $i$th column of $\mathbf{V}$, 
 and $\Im(a)$ denotes the imaginary part $a$. The third equality in (\ref{x_h_real}) follows because the eigenvectors indexed by $\Gamma_2$ and $\Gamma_4$ form complex conjugates. Thus, $\mathrm{j}\mathbf{x}_h$ is purely imaginary which in turn means that $\mathbf{x}=\Re(\mathbf{x}_a)$, where $\Re(a)$ denotes the real part of $a$. 
We express \eqref{graphHT} as 
\begin{eqnarray}
\hat {\mathbf{x}}_h=\mathbf{J}_h\hat{\mathbf{x}},\,\,\mbox{or} \,\,\,
\mathbf{x}_h\triangleq\mathcal{H}\{\mathbf{x}\}=\mathbf{V}\mathbf{J}_h\mathbf{V}^{-1}\mathbf{x},
\label{graphHTvec}
\end{eqnarray}
where $\mathbf{J}_h$ is the diagonal matrix with $i${th} diagonal element:
\begin{equation}
\label{graphHT_freq}
J_h(i)=\small\begin{cases}-\mathrm{j}, &i\in\Gamma_2\\
0, &i\in\Gamma_1\cup\Gamma_3.\\
+\mathrm{j},&i\in \Gamma_4
\end{cases}
\end{equation} 
 \begin{proposition}
	\label{proposition1}
{The graph Hilbert transform is a linear shift-invariant graph filtering operation for diagonalizable graphs.}
\end{proposition}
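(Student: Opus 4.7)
The plan is to exhibit an explicit polynomial $h$ such that $\mathcal{H}\{\mathbf{x}\}=h(\mathbf{A})\mathbf{x}$, which by the definition of a linear shift-invariant graph filter given earlier in the paper (filters are polynomials in $\mathbf{A}$) will settle the claim. Linearity is immediate from \eqref{graphHTvec} since $\mathbf{V}\mathbf{J}_h\mathbf{V}^{-1}$ is a fixed matrix acting on $\mathbf{x}$, so the real content is shift invariance.

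First, I would recall that when $\mathbf{A}$ is diagonalizable we have $\mathbf{A}=\mathbf{V}\mathbf{J}\mathbf{V}^{-1}$ with $\mathbf{J}=\mathrm{diag}(\lambda_1,\dots,\lambda_N)$, hence for any polynomial $h(\cdot)$,
\begin{equation}
h(\mathbf{A})=\mathbf{V}\,\mathrm{diag}\bigl(h(\lambda_1),\dots,h(\lambda_N)\bigr)\mathbf{V}^{-1}.\nonumber
\end{equation}
Comparing with \eqref{graphHTvec}, it suffices to find a polynomial $h$ with $h(\lambda_i)=J_h(i)$ for every $i$, where $J_h(i)$ is given by \eqref{graphHT_freq}.

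Next I would produce such an $h$ by Lagrange interpolation over the distinct eigenvalues of $\mathbf{A}$: letting $\{\mu_1,\dots,\mu_M\}$ be the distinct values among $\{\lambda_1,\dots,\lambda_N\}$, set
\begin{equation}
h(z)=\sum_{k=1}^{M} c_k\prod_{\ell\ne k}\frac{z-\mu_\ell}{\mu_k-\mu_\ell},\nonumber
\end{equation}
where $c_k\in\{-\mathrm{j},0,+\mathrm{j}\}$ is the value of $J_h$ assigned to $\mu_k$. The subtle point that makes this step work is that the assignment is single-valued on the distinct eigenvalues: the partition $\Gamma_1,\Gamma_2,\Gamma_3,\Gamma_4$ of indices is determined solely by the phase angle (and sign) of the associated eigenvalue, so every index sharing the eigenvalue $\mu_k$ lies in the same $\Gamma_r$ and thus receives the same $J_h$ value. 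This is exactly the place where I anticipate the main obstacle, and it is also the reason diagonalizability is explicitly assumed: in the non-diagonalizable case one would have to interpolate derivatives across Jordan blocks, and $\mathbf{J}_h$ as defined in \eqref{graphHT_freq} is piecewise constant on $\Gamma_r$ without a corresponding derivative specification, so the argument would not go through without further work.

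Finally I would close the proof by substituting this $h$ into $h(\mathbf{A})=\mathbf{V}\mathrm{diag}(h(\lambda_i))\mathbf{V}^{-1}=\mathbf{V}\mathbf{J}_h\mathbf{V}^{-1}$, so that $\mathcal{H}\{\mathbf{x}\}=h(\mathbf{A})\mathbf{x}$. Since $h(\mathbf{A})$ is a polynomial in $\mathbf{A}$, it is by definition a linear shift-invariant graph filter, and one may additionally verify directly that $\mathcal{H}\{\mathbf{A}\mathbf{x}\}=\mathbf{A}\,\mathcal{H}\{\mathbf{x}\}$ from $\mathbf{J}_h\mathbf{J}=\mathbf{J}\mathbf{J}_h$ (both diagonal), giving a short sanity check on shift invariance.
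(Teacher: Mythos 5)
Your proof is correct, but it takes a genuinely different route from the paper's. The paper never exhibits a polynomial inside the proof: it takes shift invariance to mean commutation with every polynomial filter $\mathbf{M}=m(\mathbf{A})$, passes to the GFT domain via \eqref{graphHTvec}, and uses the commutativity of the diagonal matrices $m(\mathbf{J})$ and $\mathbf{J}_h$ to conclude $\mathbf{H}\,m(\mathbf{A})\mathbf{x}=m(\mathbf{A})\mathbf{H}\mathbf{x}$; the polynomial representation is only discussed afterwards, through the linear system \eqref{ght_filter} for the coefficients, without addressing its solvability. You instead prove the stronger statement directly, namely that $\mathbf{H}=\mathbf{V}\mathbf{J}_h\mathbf{V}^{-1}$ equals $h(\mathbf{A})$ for an explicit Lagrange interpolant, which is exactly the paper's preliminary definition of a linear shift-invariant filter as a polynomial of $\mathbf{A}$. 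What your route buys is precisely the point the paper glosses over: commuting with $\mathbf{A}$ forces a matrix to be a polynomial in $\mathbf{A}$ only when $\mathbf{A}$ is nonderogatory, whereas your observation that $J_h(i)$ in \eqref{graphHT_freq} depends only on the phase (and sign) of $\lambda_i$ — so repeated eigenvalues receive the same value — makes the interpolant well defined for any diagonalizable $\mathbf{A}$ and simultaneously certifies that the system \eqref{ght_filter} is consistent; it also isolates cleanly why diagonalizability is assumed (in the Jordan case one would need derivative data that \eqref{graphHT_freq} does not supply). What the paper's route buys is brevity and a direct verification of the operational property, commutation with every shift-invariant filter, without constructing $h$; your closing check via $\mathbf{J}_h\mathbf{J}=\mathbf{J}\mathbf{J}_h$ is essentially the paper's argument specialized to $m(\mathbf{A})=\mathbf{A}$.
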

\begin{proof}
From (\ref{graphHTvec}), we have that $\mathbf{x}_h=\mathbf{V}\mathbf{J}_h\mathbf{V}^{-1}\mathbf{x}=\mathbf{Hx}$, where $\mathbf{H}=\mathbf{V}\mathbf{J}_h\mathbf{V}^{-1}$. By definition, graph filter $\mathbf{H}$ is linear and shift-invariant if for any graph filter of the form $\mathbf{M}=\sum_{i=0}^M m_i\mathbf{A}^i=m(\mathbf{A})$, $M\leq N$ we have
$\mathbf{H}\mathbf{M}\mathbf{x}=\mathbf{M}\mathbf{Hx},$
which in turn means that $\mathbf{H}$ should be a polynomial of $\mathbf{M}$, or equivalently, of $\mathbf{A}$. Since $\mathbf{A=VJV^{-1}}$, we have that $\mathbf{M}=\mathbf{V}m(\mathbf{J})\mathbf{V}^{-1}$. Let $\mathbf{y}$ denote the output of filter $\mathbf{M}$ for the input $\mathbf{x}$: $\mathbf{y = Mx}$. Then,  we have that $\hat {\mathbf{y}} \triangleq \mathbf{V}^{-1} \mathbf{y} = \mathbf{V}^{-1} \mathbf{Mx} =m(\mathbf{J})\hat{ \mathbf{x}}$. 
Since $\hat {\mathbf{x}}_h=\mathbf{J}_h\hat{\mathbf{x}}$, we get that
\begin{equation}
\hat{\mathbf{ y}}_h=\mathbf{J}_h\hat{\mathbf{y}}=\mathbf{J}_h\, m(\mathbf{J})\,\hat{\mathbf{x}}=m(\mathbf{J})\,\mathbf{J}_h\, \hat{\mathbf{x}}=m(\mathbf{J})\,\hat{ \mathbf{x}}_h, \label{jj}
\end{equation} 
where we use the commutativity of the diagonal matrices $m(\mathbf{J})$ and $\mathbf{J}_h$ ($\mathbf{J}$ is a diagonal because $\mathbf{A}$ is diagonalizable.) Taking inverse GFT on both sides of (\ref{jj}), we get that 
$\mathbf{y}_h \triangleq \mathbf{H}m(\mathbf{A})\mathbf{x}=m(\mathbf{A})\mathbf{x}_h=m(\mathbf{A})\mathbf{H}\mathbf{x}.\nonumber
$
\end{proof} 
 The graph Hilbert transform being a shift-invariant filter means that there exists a polynomial $h(x)=\sum_{i=0}^L h_i{x}^i$ such that $\mathbf{H}=h(\mathbf{A})$. The coefficients are evaluated by noting that $\mathbf{H}$ modifies the spectrum of the graph signal with constant spectrum of unit amplitude as specified in \eqref{graphHT}. In other words, we solve for $h_i$s such that the $i$th component $c(i)$ of the vector $\mathbf{c}=h(\mathbf{J})\mathbf{1}_N$ is given by
$c(i)=\small\begin{cases}
0,\quad i\in\Gamma_1\cup\Gamma_3\nonumber\\
 -\mathrm{j}, \quad i\in\Gamma_2\nonumber\\
 +\mathrm{j} \quad i\in\Gamma_4.\nonumber
 \end{cases} $.
 Since $c_i=h(\lambda_i)$ in the case of a diagonal $\mathbf{J}$, $h_i$s  are obtained by solving:
\begin{eqnarray}
\label{ght_filter}
 \small   h_{0} +    h_{1}\lambda_i  + \cdots   + h_L\lambda^L_{i} &=& 0,\quad i\in\Gamma_1\cup\Gamma_3\nonumber\\
     h_{0} +    h_{1}\lambda_i  + \cdots   + h_L\lambda^L_{i} &=& -\mathrm{j},\quad i\in\Gamma_2\\
     h_{0} +  h_{1}\lambda_i  + \cdots   + h_L\lambda^L_{i} &=&+\mathrm{j},\quad i\in\Gamma_4.\nonumber
  \end{eqnarray}
The solution of (\ref{ght_filter}) obtained by setting $\mathbf{A=C}$ and $L=N$ is the impulse response of the discrete Hilbert transform. In order to avoid ill-conditioning of (\ref{ght_filter}), $L$ is usually restricted to be much less than $N$.
In Figure \ref{GHT_IR}, we show the graph Hilbert transform computed using (\ref{ght_filter}) for various values of $L$ for the 1D signal graph. We observe from Figure \ref{GHT_IR}(d) that as $L$ is decreased, the spectrum of the graph Hilbert transform differs from the ideal case. This is because the corresponding columns of each graph shift ($\mathbf{A}^i$) are linearly independent and restricting the number of taps restricts the dimension of the signal space. 
\begin{figure}[t]
	\vspace{-.2in}
	\centering
	$
	\begin{array}{cc}
	\subfigure[]{\hspace{-.0in}
		\includegraphics[width=2in]{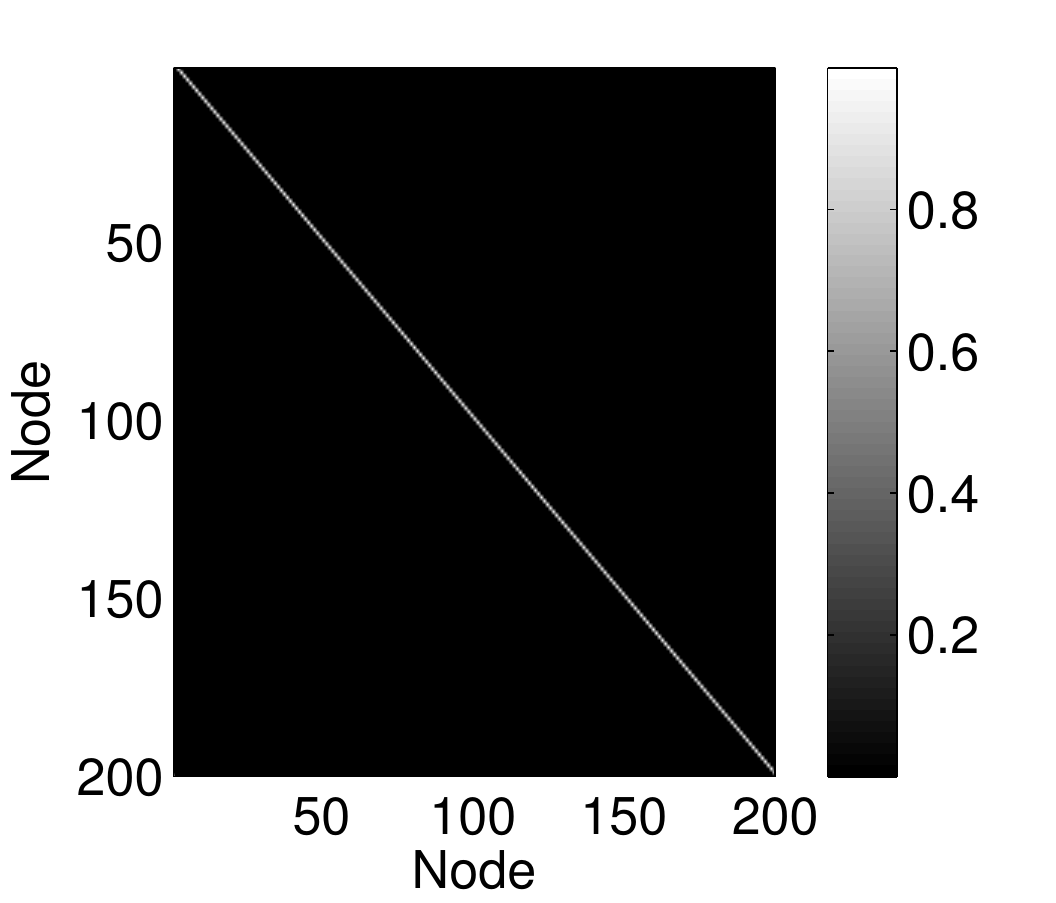}
	}\hspace{-.0in}
	\subfigure[]{
		\includegraphics[width=2in]{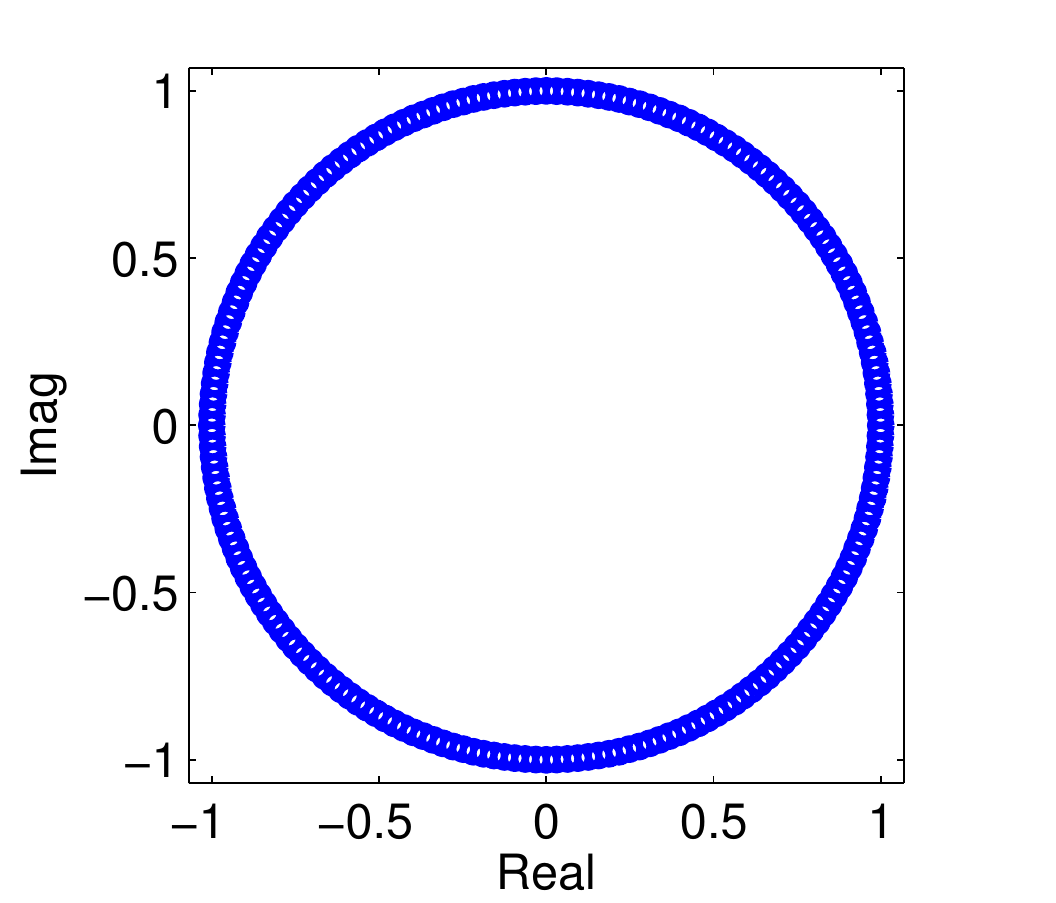}
	}\\
	\hspace{-.0in}
	\subfigure[]{
		\includegraphics[width=2in]{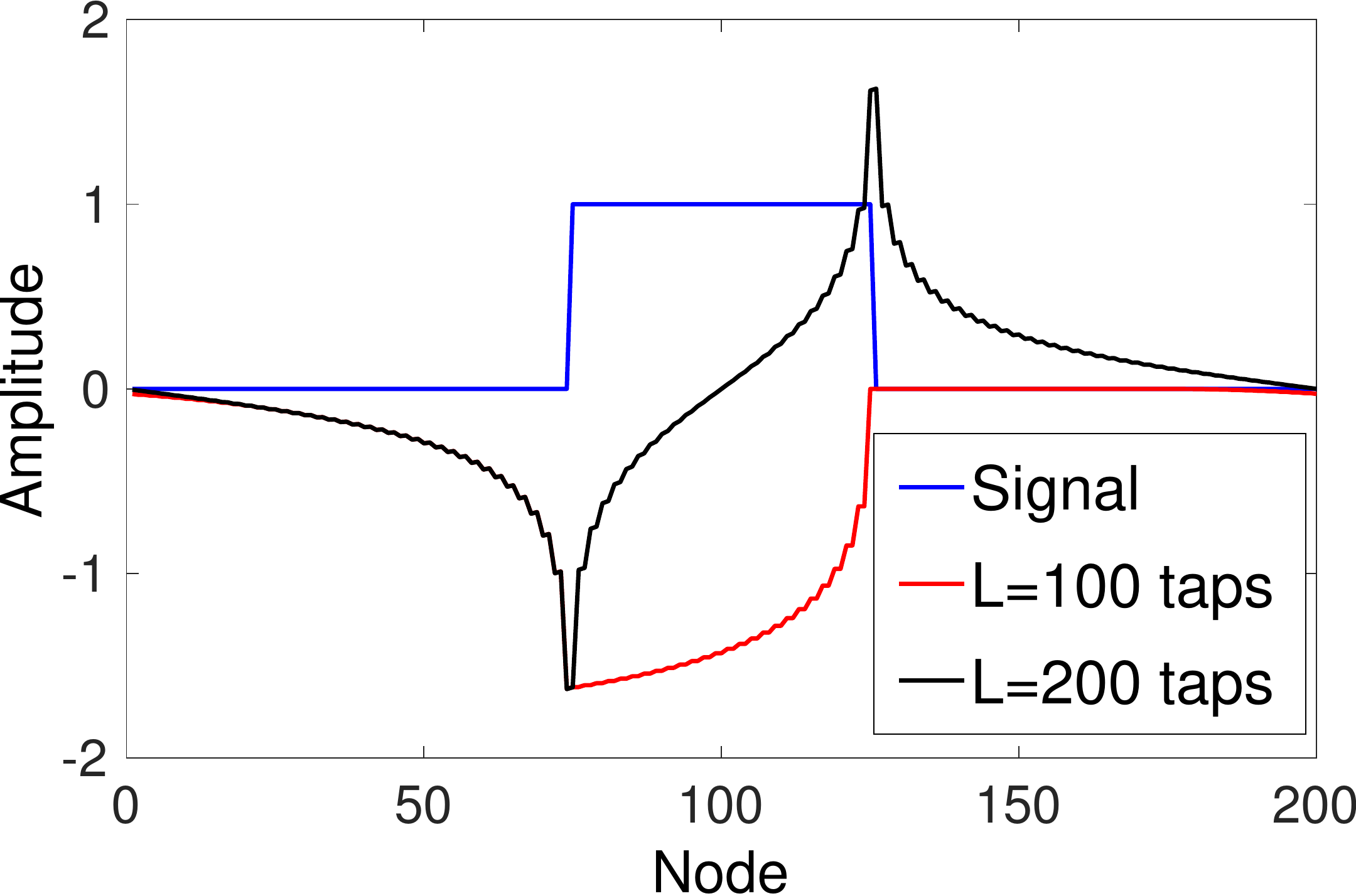}
	}
	\subfigure[]{\hspace{-.0in}
		\includegraphics[width=2in]{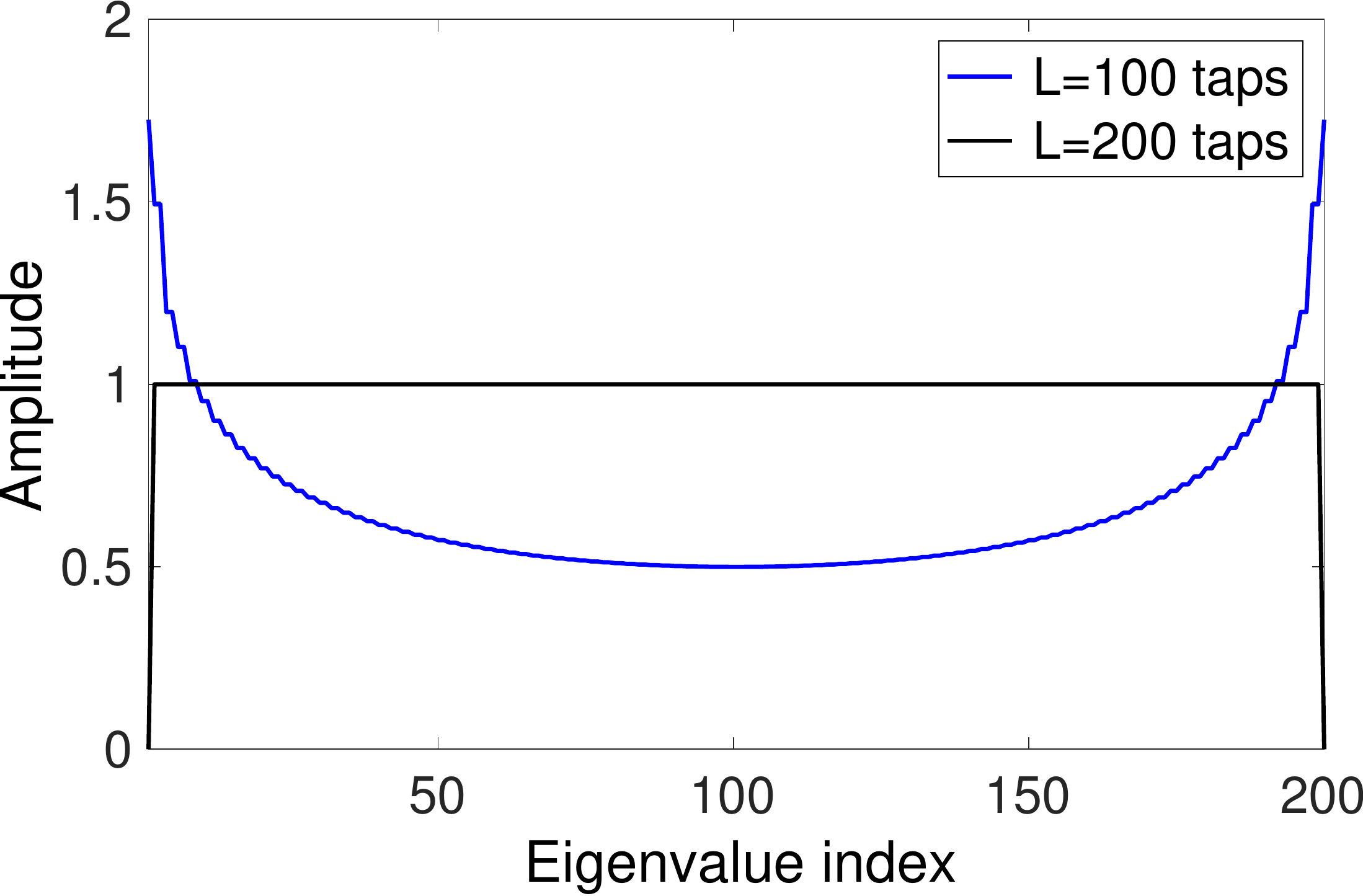}
		\hspace{.1in}
	}
	\end{array}
	$
	\caption{1D signal graph (a) $\mathbf{A}$, (b) Eigenvalues of $\mathbf{A}$, (c) Signal and its graph Hilbert transform computed using $L=N$ and $L=N/2$, (d) GFT spectrum of graph Hilbert transform.
	}
	\label{GHT_IR}
\end{figure}
\vspace{-.in}
\subsection{Some properties of graph analytic signal/graph Hilbert transform}
 Let $\mathcal{A}$, $\mathcal{I}$, and $\mathcal{H}$ denote the graph analytic signal, identity, and graph Hilbert transform operators, respectively, such that $\mathbf{x}_a=\mathcal{A}\{\mathbf{x}\},\,\, \mathbf{x}_h=\mathcal{H}\{\mathbf{x}\}$ and $\mathcal{A}=\mathcal{I}+\mathrm{j}\mathcal{H}$. For $\mathbf{x}= \eta\mathbf{f}+ \eta^*\mathbf{f}^*$ such that $\mathbf{f}\in\mathbf{V}_2\cup\mathbf{V}_4$ and $\eta\in\mathbb{C}$, we have the following properties:
\begin{enumerate}[leftmargin=0.5cm]
\item {\it Graph-shift invariance:} $\mathcal{H}\{\alpha \mathbf{Ax}\}=\alpha\mathbf{A}\mathcal{H}\{\mathbf{x}\}$, $\alpha\in\mathbb{C}$.
\item{\it Superposition:} For $\mathbf{x}_1,\mathbf{x}_2\in \mathbf{V}_2\cup\mathbf{V}_4$ and $\alpha,\beta\in\mathbb{C}$, we have that $\mathcal{H}\{\alpha\mathbf{x}_1+\beta\mathbf{x}_2\}=\alpha\mathcal{H}\{\mathbf{x}_1\}+\beta\mathcal{H}\{\mathbf{x}_2\}$.\\
	{\it Proof}:
From (\ref{graphHTvec}), we have that for $\mathbf{x}=\alpha\mathbf{x}_1+\beta\mathbf{x}_2$
\begin{align}
\mathcal{H}\{\mathbf{x}\}&=\mathbf{V}\mathbf{J}_h\mathbf{V}^{-1}\{\mathbf{x}\}
=\mathbf{V}\mathbf{J}_h\{\alpha\mathbf{V}^{-1}\mathbf{x}_1+\beta\mathbf{V}^{-1}\mathbf{x}_2\}\nonumber\\
&=\mathbf{V}\mathbf{J}_h\{\alpha\hat{\mathbf{x}}_1+\beta\hat{\mathbf{x}}_2\}=\mathbf{V}\{\alpha\hat{\mathbf{x}}_{1,h}+\beta\hat{\mathbf{x}}_{2,h}\}=\alpha\mathcal{H}\{\mathbf{x}_{1}\}+\beta\mathcal{H}\{\mathbf{x}_{2}\},\nonumber
\end{align}
where $\hat{\mathbf{x}}_{1,h}$ and $\hat{\mathbf{x}}_{2,h}$ denote the GFT of $\mathcal{H}\{\mathbf{x}_{1}\}$ and $\mathcal{H}\{\mathbf{x}_{2}\}$, respectively.

\item {\it Phase-shifting action\footnote{For simplicity, we use the same operator notation to denote the corresponding operation for both the signal seen as a vector and as a function of the node. For example, $\mathcal{H}\{ \cos(\omega n)\}$ denotes the operator action directly on the function $\cos(\omega n)$ evaluated at the $n$th node, whereas $\mathcal{H}\{\mathbf{x}\}$ denotes the vector that comes out of applying the graph Hilbert transform operation on the signal $\mathbf{x}$. In the case when $x(n)=\cos(\omega n)$, we have $\mathcal{H}\{\mathbf{x}\}(n)=\mathcal{H}\{ \cos(\omega n)\}$.}:} For $i\in\Gamma_2\cup\Gamma_4$: 
\begin{eqnarray}
\mathcal{H}\{\Re\left(\mathbf{v}_i\right)\}=\Im\left(\mathbf{v}_i\right),\,\,\mbox{and}\,\,
\mathcal{H}\{\Im\left(\mathbf{v}_i\right)\}=-\Re\left(\mathbf{v}_i\right).
 \label{realimag}
\end{eqnarray} 
\emph{Proof:}  
Consider $i\in\Gamma_2$. Using the property 2, we have 
\begin{align}
\mathcal{H}\{2\Re\left(\mathbf{v}_i\right)\}&=\mathcal{H}\{\mathbf{v}_i + \mathbf{v}^*_i\}=\left(\mathrm{j}\mathbf{v}_i-\mathrm{j}\mathbf{v}^*_i \right)=2\Im\left(\mathbf{v}_i\right)\nonumber\\
\mathcal{H}\{2\mathrm{j}\Im\left(\mathbf{v}_i\right)\}&=\mathcal{H}\{\mathbf{v}_i - \mathbf{v}^*_i\}=\left(\mathrm{j}\mathbf{v}_i+\mathrm{j}\mathbf{v}^*_i \right)=2\mathrm{{j}}\Re\left(\mathbf{v}_i\right).
\nonumber
\end{align}
The proof for $i\in\Gamma_4$ follows similarly. 
Equation (\ref{realimag}) generalizes the quadrature phase-shifting action of the discrete Hilbert transform $\mathcal{H}_c$ on sinusoids: 
\begin{align}
\mathcal{H}_c\{\cos (\omega_i n)\}&=\mathcal{H}_c\{\Re\{\mathbf{v}_i\}\}(n)=\sin (\omega_i n)=\Im\{\mathbf{v}_i\}(n),\nonumber\\
\mathcal{H}_c\{\sin (\omega_i n)\}&=\mathcal{H}_c\{\Im\{\mathbf{v}_i\}\}(n)= - \cos (\omega_i n)=-\Re\{\mathbf{v}_i\}(n), \nonumber
\end{align}
noting when $\mathbf{A=C}$, $\mathbf{v}_i=e^ {\mathrm{j}(\omega_i n)}$, where $\omega_i=\frac{ 2\pi(i-1)}{N}$. 
\item {\it Inverse:}
$ \mathcal{H}^2=-\mathcal{I}$ or, $\mathcal{H}^{-1}=-\mathcal{H}$.\\
\emph{Proof}: 
From (\ref{graphHTvec}), we have $\hat {\mathbf{x}}_h=\mathbf{J}_h\hat{\mathbf{x}}$. Hence, the GFT of $\mathcal{H}^2\{\mathbf{x}\}$ is given by $\mathbf{J}^2_h \mathbf{\hat{x}}$.  We have from (\ref{graphHT_freq}) that
$J^2_h(i)=J_h(i)J_h(i)=\small\begin{cases}\mathrm{j}^2=-1, &i\in\Gamma_2\cup\Gamma_4\\
0, &i\in\Gamma_1\cup\Gamma_3\nonumber
\end{cases},
\nonumber
$
which shows that $\mathbf{J}^2_h\mathbf{\hat{x}}=-\mathbf{\hat{x}}$ for $\mathbf{x}\in\mathbf{V}_2\cup\mathbf{V}_4$. In other words, $\mathcal{H}^2\{\mathbf{x}\}=-\mathbf{x}$~which completes the proof.

\item {\it Repeated operation:} $\mathcal{A}^2\{\mathbf{x}\}=(\mathcal{I+\mathrm{j}H})^2\{\mathbf{x}\}=2\mathcal{A}\{\mathbf{x}\}$ and $\mathcal{H}^4\{\mathbf{x}\}=\mathbf{x}$. (Follows from Property 4).

\item {\it Isometry: 
}Since $\forall i$, $|\hat{x}_h(i)|=|\hat{x}(i)|$, we have that $\|\hat{\mathbf{x}}_h\|_p=\|\hat{\mathbf{x}}\|_p$, where $\|\hat{\mathbf{x}}\|_p=\left(\sum_i |\hat{x}(i)|^p\right)^{\frac{1}{p}}$ is the $\ell_p$ norm of $\hat{\mathbf{x}}$, $p\geq 1$, assuming $\|\hat{\mathbf{x}}\|_p< \infty$, that is,  $\hat{\mathbf{x}}\in\ell^N_p(\mathbb{C})$. 
In particular, 
  $\|\hat{\mathbf{x}}_h\|^2_2=\|\hat{\mathbf{x}}\|^2_2=\frac{1}{2}\|\hat{\mathbf{x}}_a\|_2^2$. If  $\mathbf{V}$ is unitary, $ \|\mathbf{x}_h\|^2_2= \|\mathbf{x}\|^2_2=\frac{1}{2}\|\mathbf{x}_a\|^2_2$.
\item{\it Preservation of orthogonality}: {If $\mathbf{V}$ is unitary and $\textbf{x}_1$, $\textbf{x}_2$ are orthogonal, $\langle\mathbf{x}_1,\mathbf{x}_2\rangle=0$, then $\langle\mathcal{H}\mathbf{x}_1,\mathcal{H}\mathbf{x}_2\rangle=0$.}\\
\emph{Proof:}
Since $\mathbf{V}^{-1}$ is unitary, $\langle\hat{\mathbf{x}}_1,\hat{\mathbf{x}}_2\rangle=\langle\mathbf{x}_1,\mathbf{x}_2\rangle$. Then we have that 
\begin{align}
\langle\mathcal{H}\mathbf{x}_1,\mathcal{H}\mathbf{x}_2\rangle
&=\langle\mathbf{V}^{-1}\mathcal{H}\mathbf{x}_1,\mathbf{V}^{-1}\mathcal{H}\mathbf{x}_2\rangle
=\langle\mathbf{J}_h\hat{\mathbf{x}}_1,\mathbf{J}_h\hat{\mathbf{x}}_2\rangle= \langle\hat{\mathbf{x}}_1,\hat{\mathbf{x}}_2\rangle=\langle\mathbf{x}_1,\mathbf{x}_2\rangle=0.\nonumber
\end{align}
\end{enumerate} 
Since $\mathcal{A}=\mathcal{I}+\mathrm{j}\mathcal{H}$, properties 1 and 2 are also satisfied by $\mathcal{A}$. Properties 1 to 7 do not hold if $\mathbf{x}$ has contribution from the subspaces $\mathbf{V}_1$ or $\mathbf{V}_3$ as $\mathcal{H\{\mathbf{x}\}}=\mathbf{0}$ for $\mathbf{x}\in \mathbf{V}_1\cup\mathbf{V}_3$. We note here that the 'one-sidedness' and other properties of the graph analytic signal/graph Hilbert transform of a real signal $\mathbf{x} $ are decided entirely by the adjacency matrix $\mathbf{A}$ of the underlying graph. A real signal $\mathbf{x}$ may have a graph analytic signal $\mathbf{x}_a$ with a larger number of nonzero GFT coefficients in one graph than in another graph.
\subsection{On the graph Hilbert transform and the Bedrosian property}
\label{GHTbedrosian_sec}
We show a limitation of the graph Hilbert transform in this section.
One of the important properties possessed by the conventional Hilbert transform (and its fractional versions \cite{ArunFrHT}) is that it obeys the Bedrosian property, that is, if $f$ and $g$ are two signals with disjoint Fourier spectra (DFT or discrete-time FT, such that $f$ is low-pass and $g$ is high-pass, then we have that the Hilbert transform $\mathcal{H}_c$ satisfies
\begin{equation}
\mathcal{H}_c\{f(n)g(n)\}= f(n)\mathcal{H}_c\{g(n)\}.\quad \forall n
\nonumber
\end{equation}
As we show through experiments next, the graph Hilbert transform of a general graph does not possess the Bedrosian property. In our opinion, there are two important factors for this limitation. First, the point-wise product in the node domain does not correspond to a graph-frequency domain convolution (no definition for frequency domain convolution for graphs exists currently).  Second, unlike the 1D case where the DFT basis is functionally related to the frequency (as the entries of the $i$th DFT column are given by complex exponentiation of the $i$th frequency to different powers), the GFT of a general graph usually is not related to its eigenvalues through analytical expressions. We demonstrate this by considering the jittered 1D signal modeled as a graph signal with
$\mathbf{A}=\small\scriptstyle\left(\begin{array}{cccccc}
0& w_1& 0&\cdots& 0\\
0&0&w_2&\cdots&0\\\nonumber
\vdots&\vdots&\vdots &\vdots &\\
w_N&0&0&\cdots&0
\end{array}\right)$,
where $w_i$ denotes the spacing between the $i$th and $(i+1)$th samples. For uniformly sampled 1D signal, $w_i=1$ for all $i$. 
From the Bedrosian property we have that 
\begin{equation}
\mathcal{H}_c\{\cos(\omega_in)\cos(\omega_j n)\}= \cos(\omega_in)\sin(\omega_jn),\,\,\,\mbox{for}\,\, \omega_j>\omega_i
\nonumber
\end{equation}
which when expressed in terms of the corresponding DFT (GFT) vectors  $\mathbf{v}_i(n)=e^{\mathrm{j}\omega_i n}$ becomes
\begin{equation}
\mathcal{H}\{\Re\left(\mathbf{v}_i\right)\cdot \Re\left(\mathbf{v}_j\right)\}=\Re\left(\mathbf{v}_i\right)\cdot\Im\left(\mathbf{v}_j\right),
\label{bedrosian_prop}
\end{equation}
where $\mathbf{f}\cdot\mathbf{g}$ denotes the vector obtained by component-wise products of $\mathbf{f}$ and $\mathbf{g}$.
We test the validity of the Bedrosian property by computing the graph Hilbert transform of the signal $ \Re\left(\mathbf{v}_i\right)\cdot \Re\left(\mathbf{v}_j\right)$ where $i$ and $j$ correspond to the low and high frequency GFT basis vectors, respectively (Here we use the  frequency-ordering proposed in \cite{Sandry1} based on $MS_g$ $\|\mathbf{x}-\mathbf{Ax}\|_2$). We compare the graph Hilbert transform of $\Re\left(\mathbf{v}_i\right)\cdot \Re\left(\mathbf{v}_j\right)$ with $\Re\left(\mathbf{v}_i\right)\cdot\Im\left(\mathbf{v}_j\right)$. We consider the low-jitter case $w_i=1+d_i$, $d_i$s drawn independently from the Gaussian distribution $\mathcal{N}(0,0.01)$. Repeating the experiment multiple times, we observe that the left and right hand sides of (\ref{bedrosian_prop}) almost never coincide. In Figure \ref{bedrosian}, we consider a particular realization for $\mathbf{x}=\Re\left(\mathbf{v}_3\right)\cdot \Re\left(\mathbf{v}_{57}\right)$, where $N=100$. We observe that the graph Hilbert transform does not coincide with either $\Re\left(\mathbf{v}_3\right)\cdot \Im\left(\mathbf{v}_{57}\right)$ or $\Im\left(\mathbf{v}_3\right)\cdot \Re\left(\mathbf{v}_{57}\right)$. This shows that the graph Hilbert transform does not possess a Bedrosian property even for graphs approximately similar to the 1D graph. 
\begin{figure}[t]
\vspace{-.1in}
\centering
\subfigure[\hspace{-.2in}]{
\includegraphics[width=2in]{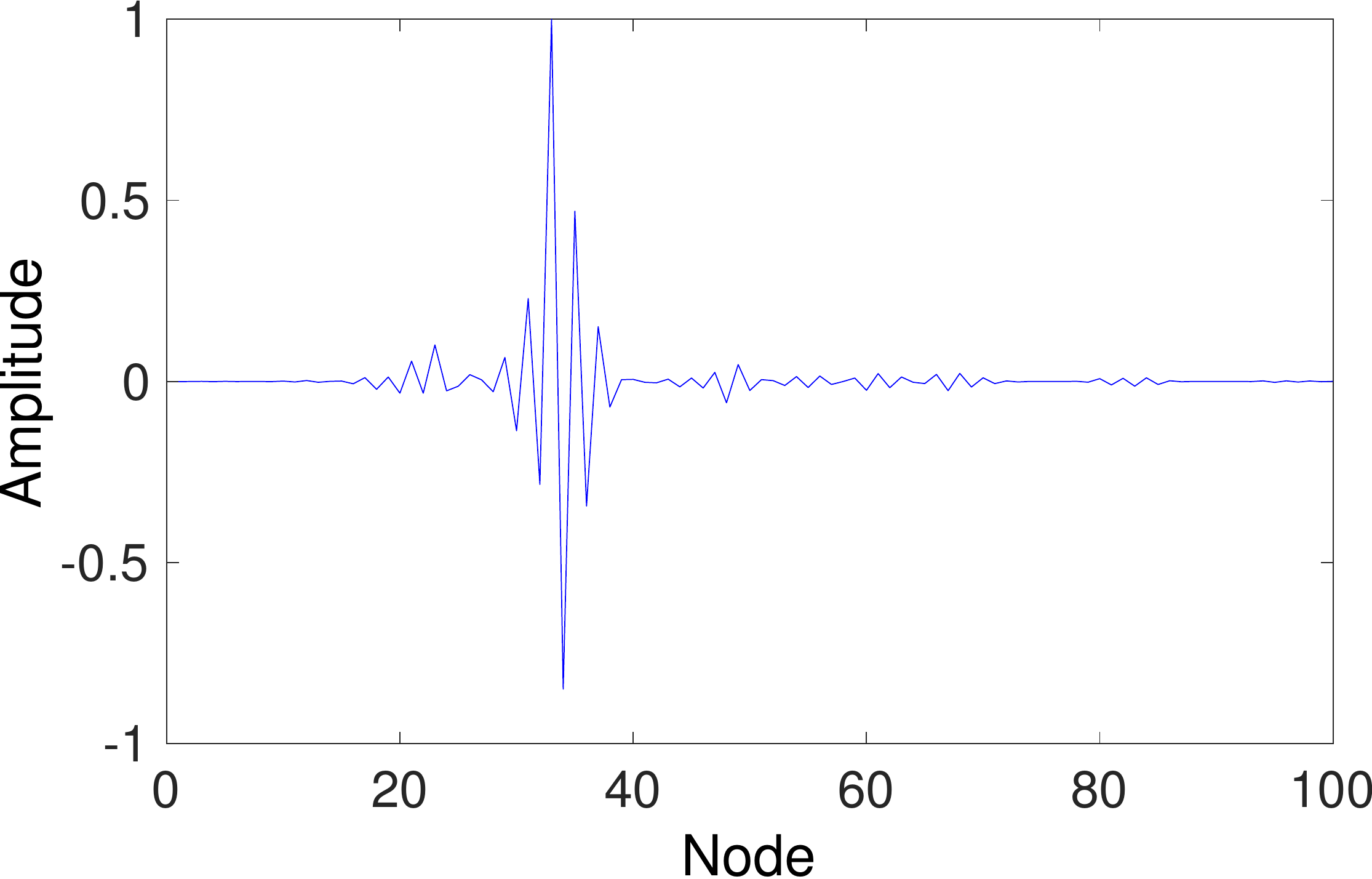}}\\
$
\begin{array}{rr}
\hspace{-.1in}
\subfigure[\hspace{-.2in}]{
\includegraphics[width=2in]{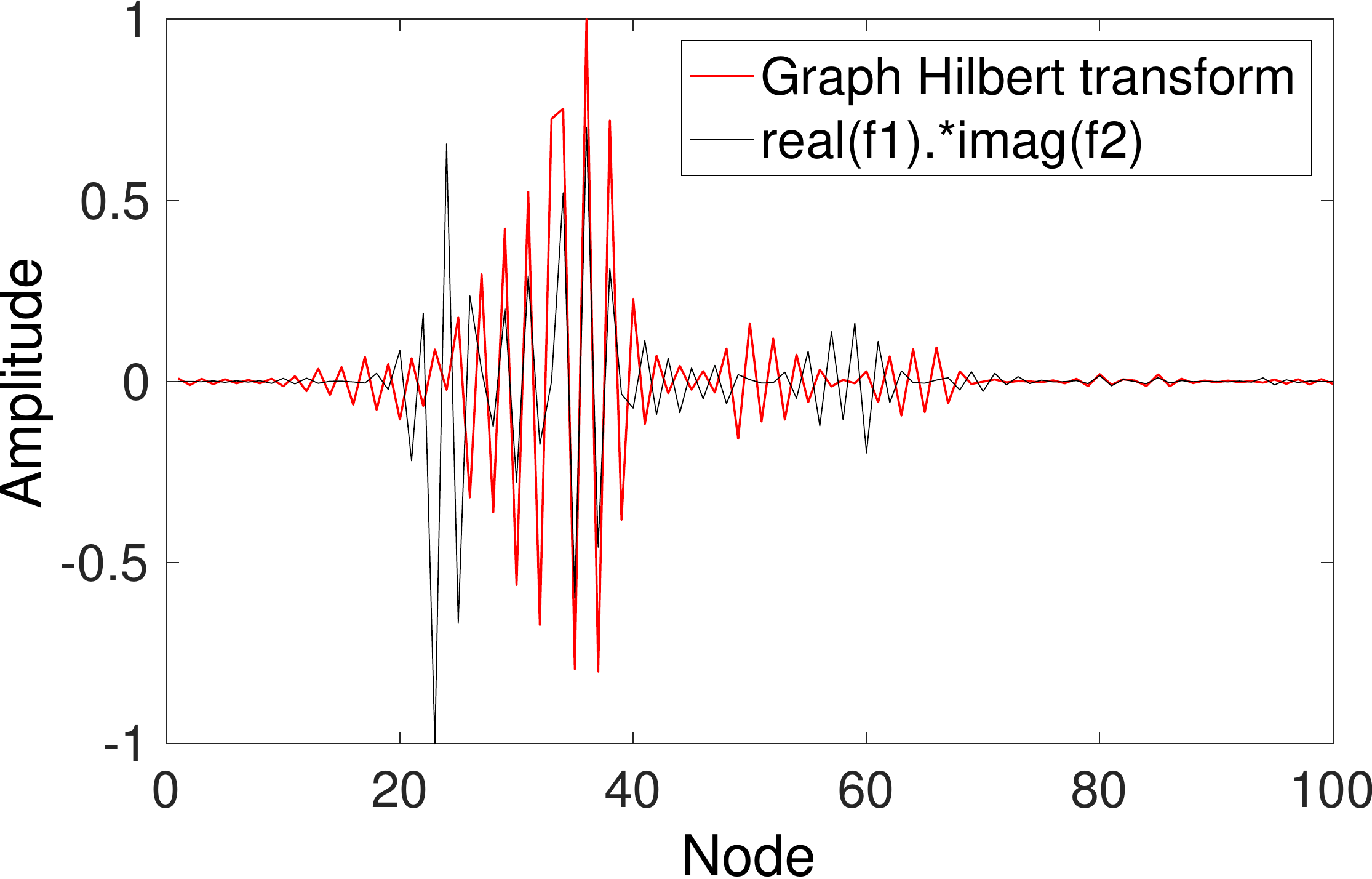}
}\hspace{-.in}
\subfigure[\hspace{-.1in}]{
\includegraphics[width=2in]{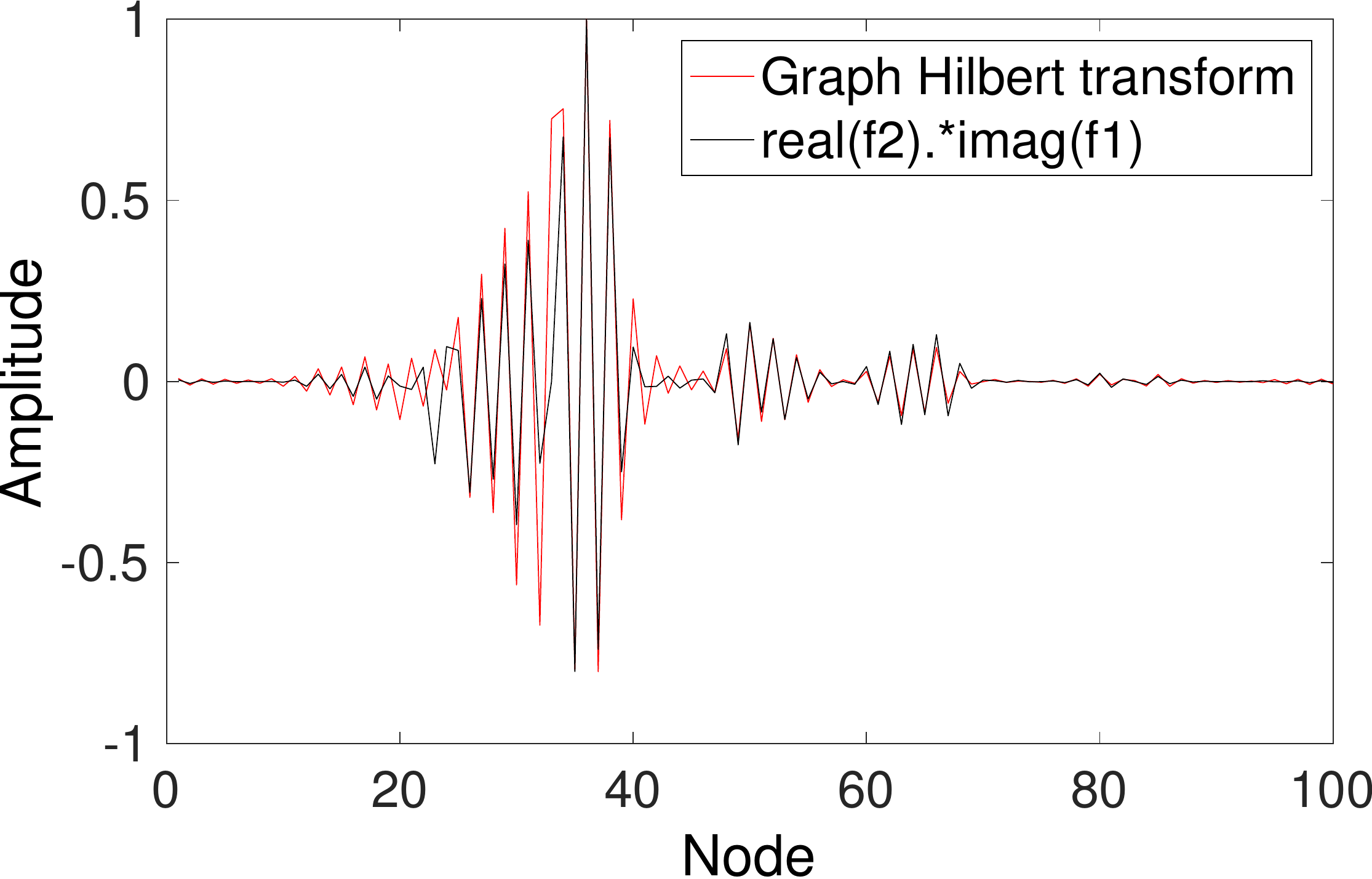}
}
\end{array}
$
\caption{Graph Hilbert transform of a signal for the jittered 1D signal graph $\mathbf{x}=\Re\left(\mathbf{f}_1\right)\cdot \Re\left(\mathbf{f}_{2}\right)$, where $\mathbf{f}_1=\mathbf{v}_3$ ({\color{black}$\mbox{MS}_g$} = 0.092) and $\mathbf{f}_2=\mathbf{v}_{57}$ ($\mbox{MS}_g$ = 0.521). (a) Signal, (b) $\mathcal{H}\{\mathbf{x}\}$ and $\Re\left(\mathbf{f}_1\right)\cdot \Re\left(\mathbf{f}_{2}\right)$, and (c)~$\mathcal{H}\{\mathbf{x}\}$ and $\Re\left(\mathbf{f}_2\right)\cdot \Re\left(\mathbf{f}_{1}\right)$. }
\label{bedrosian}
\end{figure}
 \vspace{-0.0in}
\section{The Graph analytic signal and Modulation Analysis}
\label{GAMFM}
The concept of analytic signal is used extensively in the demodulation of amplitude-modulated frequency-modulated signals\cite{ Boashash,Vakman_book,Maragos,Handel_Maragos,Arun_binaural}. Modulation analysis decomposes a signal into two components: one varying smoothly, capturing the average information in the signal (referred to as the AM), and the second, capturing the finer variations (referred to as the phase or frequency modulation (PM or FM)). Most demodulation techniques involve the construction of the analytic signal, implicitly or explicitly. Motivated by 1D modulation definitions \cite{Boashash,Vakman_book}, we next propose AM and PM for graph signals:
\begin{definition}
The AM $\mathcal{A}_{x,\mathcal{V}}$ and PM $\phi_{x,\mathcal{V}}$ of a graph signal $\mathbf{x}$ are defined as the magnitude and phase angle of the graph analytic signal, respectively:
\begin{eqnarray}
\mathcal{A}_{x,\mathcal{V}}(i)&=&\left| {x}_a(i)\right|,\quad  \forall\,i\in\{1,2,\cdots, N\}\nonumber \\
{\phi}_{x,\mathcal{V}}(i)&=&\mathrm{arg}( {x}_a(i)),
\label{ampm}
\end{eqnarray}
where $\mathrm{arg}(\cdot)$ denotes the 4-quadrant arctangent~function which takes values in the range $(-\pi,\pi]$.
\end{definition}
For $\mathbf{A=C}$, (\ref{ampm}) reduces to 1D AM and PM definitions.
This is because setting $\mathbf{A=C}$ results in the graph analytic signal to coincide with the conventional analytic signal as we have discussed in Section \ref{sec:gas}. This in turn implies that the amplitude and phase of the graph analytic signal also coincide with the conventional definitions. We hereafter refer to $\mathcal{A}_{x,\mathcal{V}}$ as the graph AM and ${\phi}_{x,\mathcal{V}}$ as the graph PM. We next discuss computing  frequency-modulation for the graph signal.
\vspace{-.in}

\begin{definition}[Frequency modulation]
The frequency modulation (FM) of a graph signal $\mathbf{x}$ is defined as $\omega_{x,\mathcal{V}}=\phi^u_{x,\mathcal{V}}-\mathbf{A}\phi^u_{x,\mathcal{V}}$,
where $\phi^u_{x,\mathcal{V}}$ denotes the unwrapped phase of the graph analytic signal.
\end{definition}
The unwrapped phase $\phi^u_{x,\mathcal{V}}$ is obtained by performing one-dimensional conventional phase-unwrapping on  $\phi_{x,\mathcal{V}}$ \cite{Oppenheim,Bovik,Tribolet}.
The frequency modulation definition generalizes the backward-difference operator used to compute conventional FM for 1D signals \cite{Oppenheim} defined as the derivative of the phase angle of the analytic signal. 
The phase-unwrapping operation is performed since $\mathrm{arg}(\cdot)$ function returns phase values wrapped in the range $(-\pi,\pi]$ \cite{Arun_amfm2, Oppenheim, Tribolet}. We assume that $\mathbf{A}$ is normalized such that $|\lambda|_{{max}}=1$.
In order to visualize the proposed graph AM and graph FM, we consider speech signal viewed as a graph signal using the linear prediction coefficients as proposed in \cite{ArunEusipco15}. For each speech frame,  we construct $\mathbf{A}$ by connecting every sample to its preceding $P$ samples with edge-weights equal to the corresponding $P$th-order linear prediction coefficients. We plot the obtained graph AM and graph FM in Figure \ref{kabree}. We also include the 1D AM and FM for comparison. We observe that the graph FM is smoother than 1D FM, and the graph AM and 1D AM nearly coincide. 
 \begin{figure}[t]
  {\vspace{-0.in}
  \centering
  \hspace{-.in}
  \subfigure[\hspace{-.2in}]{\hspace{-.001in}
  \includegraphics[width=2.2in]{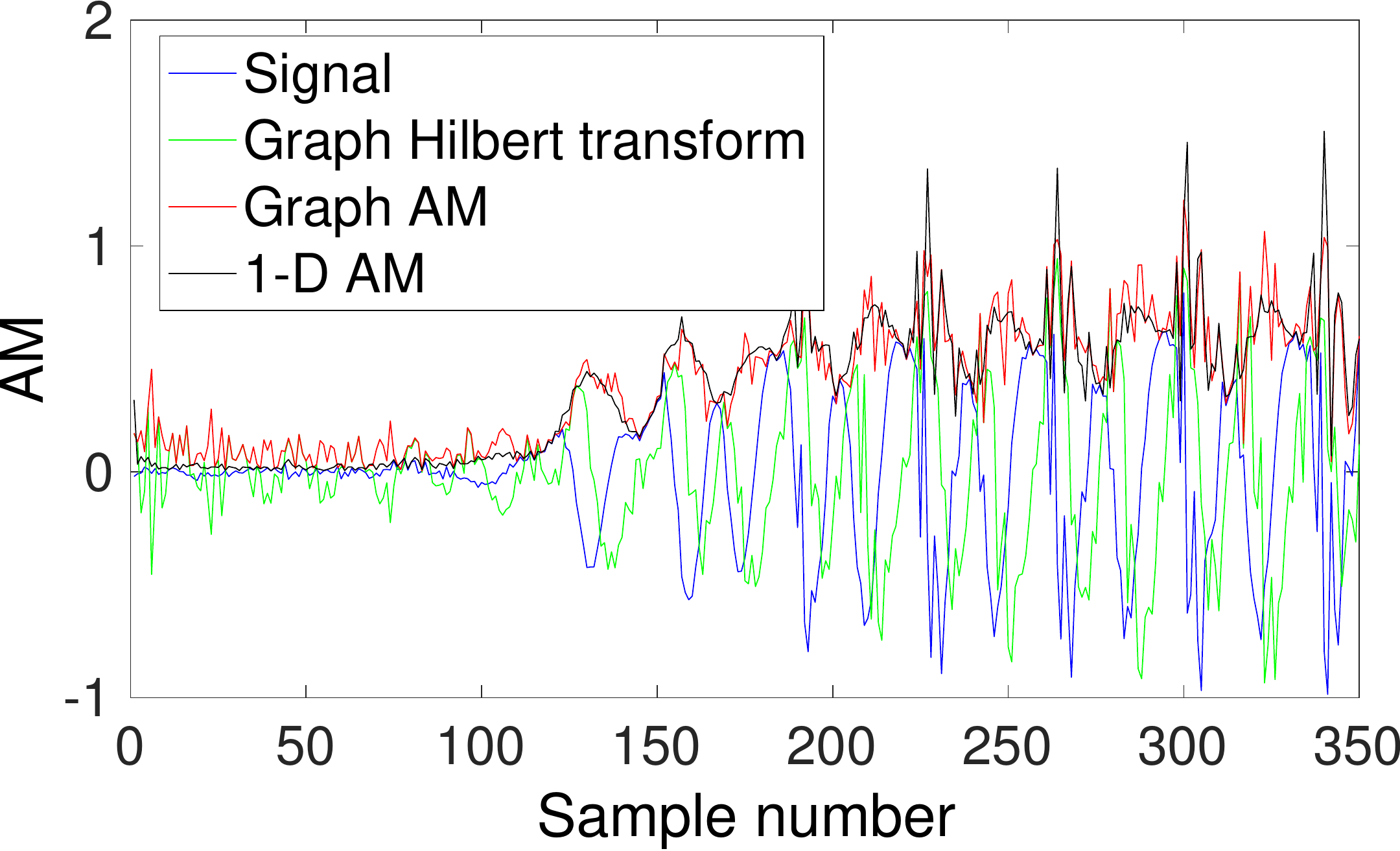}
  } 
  \subfigure[\hspace{.2in}]{\vspace{-.in}
  \includegraphics[width=2.2in]{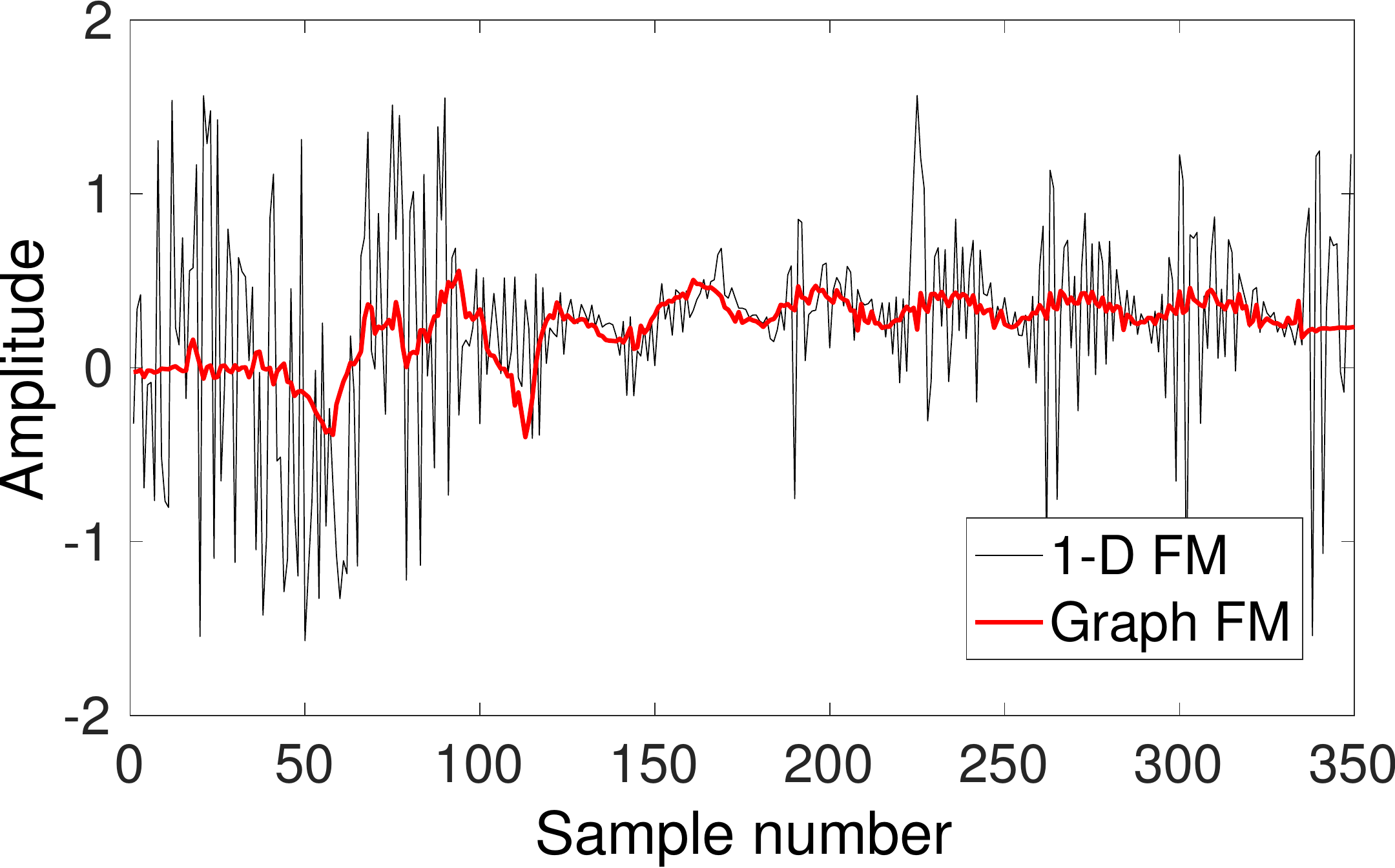}
  }
  \caption{Speech signal, female utterance of the word 'Head', sampled at $16\,$kHz, taken from the NTVD database \cite{NTVDb}. (a) AM, and (b) FM for $P=8$.}
  \label{kabree}}
  \vspace{-0.1in}
\end{figure} 

\section{Experiments}
We next illustrate the applications of the proposed concepts on few synthesized and real-world signal examples. 
\subsection{Graph Hilbert transform and highlighting of singularities/anomalies} 
\label{GHT_edge_sec}
We consider first experiments that demonstrate the edge-highligting behaviour of the graph Hilbert transform in simulated small-world graphs and 2D-image graphs.
The conventional Hilbert transform has been shown to be useful for highlighting singularities in 1D/2D signals \cite{Lohmann, ArunFrHT}. This is a consequence of the functional form of impulse response of the Hilbert transform. Since the graph Hilbert transform generalizes the discrete Hilbert transform, our hypothesis is that the graph Hilbert transform also highlights singularities. We have already seen how the conventional 1D-Hilbert transform, as a special case of the graph Hilbert transform when $\mathbf{A=C}$, highlights edges or anomalies (cf. Figure \ref{GHT_IR}). We next consider a $40\times 40$ 2D signal or image signal. The image is a section of the {\emph{coins}
	image taken from the MATLAB library. Since there is no unique directed graph for an image signal, we define the graph as an extension from the 1D-setting, that is, we consider that $j$th pixel in $i$th row to be connected to the $(j+1)$th pixel in the same row and to the $j$th pixel in the $(i+1)$th row. The corresponding graph then has the adjacency matrix $\mathbf{A}=\mathbf{C}\otimes \mathbf{C}$ as shown in Figure \ref{2Dedge}(a). The image signal and its graph Hilbert transform (reshaped as an image) are shown in Figures \ref{2Dedge}(b) and \ref{2Dedge}(c), respectively. We observe that the graph Hilbert transform specialized to the 2D signal case exhibits edge highlighting behavior. We note here that connecting the pixels differently leads to alternative directed graphs, and we find in our experiments that the corresponding graph Hilbert transforms also highlight edges. However, all these cases are not reported here to avoid repetition.}
\begin{figure}[t]
	\vspace{-.in}
	{\centering
		\hspace{-.in}\subfigure[]{\hspace{-.3in}
			\includegraphics[width=1.6in]{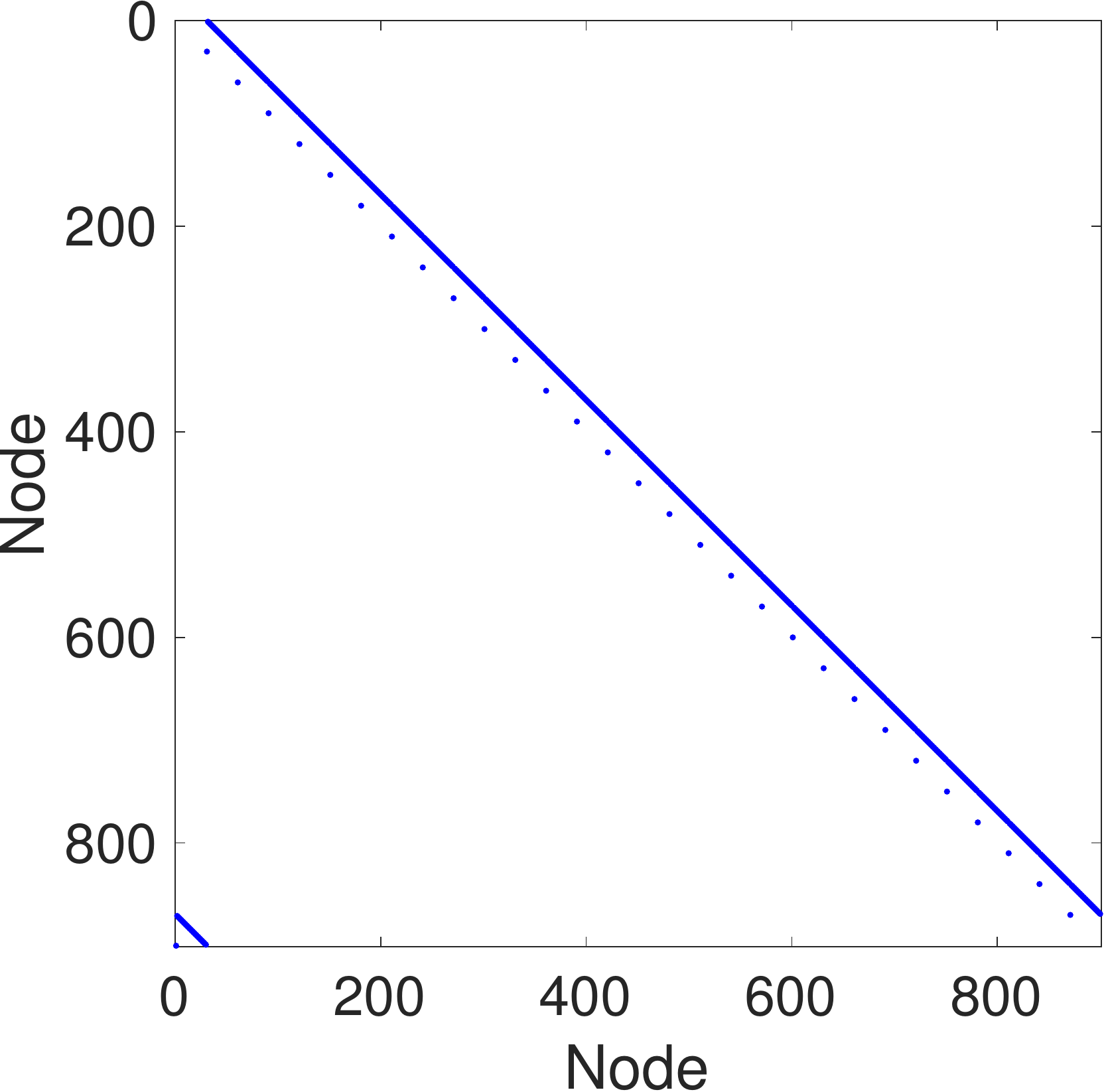}}
		\subfigure[]{
			\includegraphics[width=1.6in]{2Dedge.pdf}}
		\hspace{-.in}\subfigure[\hspace{-.0in}]{
			\includegraphics[width=1.6in]{2DedgeGHT.pdf}}
		\caption{Anomaly highlighting behavior of the graph Hilbert transform for 2D signal graph. (a) $\mathbf{A}$, (b) signal, and (c) graph Hilbert transform.}
		\label{2Dedge}
	}
\end{figure}

We next consider a synthesized social network graph consisting of 10 communities with 6 member nodes each. The nodes within each community are strongly connected in addition to having inter-community edges. 
The intra-community edge-weights are drawn from the uniform distribution over $[0,1]$, and the inter-community edge-weights are drawn from uniform distribution over $[0,0.5]$, and randomly placed across nodes from different communities (Note that the resulting graph is highly assymmetric). Graphs with real edge weights have been extensively employed in analyzing data occuring in many practical applications such as road traffic analysis, brain connectivity \cite{Shuman}. We consider the case of weighted random graphs to demonstrate the potential of our concepts to such application areas. We normalize $\mathbf{A}$ have $|\lambda|_{max}=1$. The nodes are labelled to correspond to the row index of the adjacency matrix. We consider two different cases, one with few inter-community edges ($1\%$ of the total number of possible edges in the graph) and the other with denser edges ($10\%$ percent of the total edges possible in the graph). For each case, we compute the graph Hilbert transform using (\ref{graphHT}) for the graph signal which is zero everywhere except at nodes 18 to 23 (which lie in communities 3 and 4) being active. By intuition, we expect all the nodes connected to these nodes which have value zero (thus making a singularity or anomaly) to be highlighted by the graph Hilbert transform. We observe from Figure \ref{community_edge}(b) that this is indeed the case. The graph Hilbert transform takes large values at nodes 15 and 16 since they are strongly connected to node 18 (cf. Figures \ref{community_edge}(a)-(b)). Similarly, presence of strong edge between nodes 18 to 50 results in node 50 being highlighted by the graph Hilbert transform. Similar arguments can be made for nodes 2, 3, and 55, all of which are highlighted by the graph Hilbert transform. We also note that the extent to which a node is highlighted also varies with the strength of the connecting edge. In the case of dense inter-community edges, we observe that the graph Hilbert transform highlights a large number of nodes since the nodes from 18 to 23 are connected to many nodes (cf. Figures \ref{community_edge}(c)-(d)). 
\begin{figure}[]
	\vspace{-.in}
	{
		$
		\begin{array}{cc}
		\subfigure[\hspace{.0in}]{
			\includegraphics[width=2.5in]{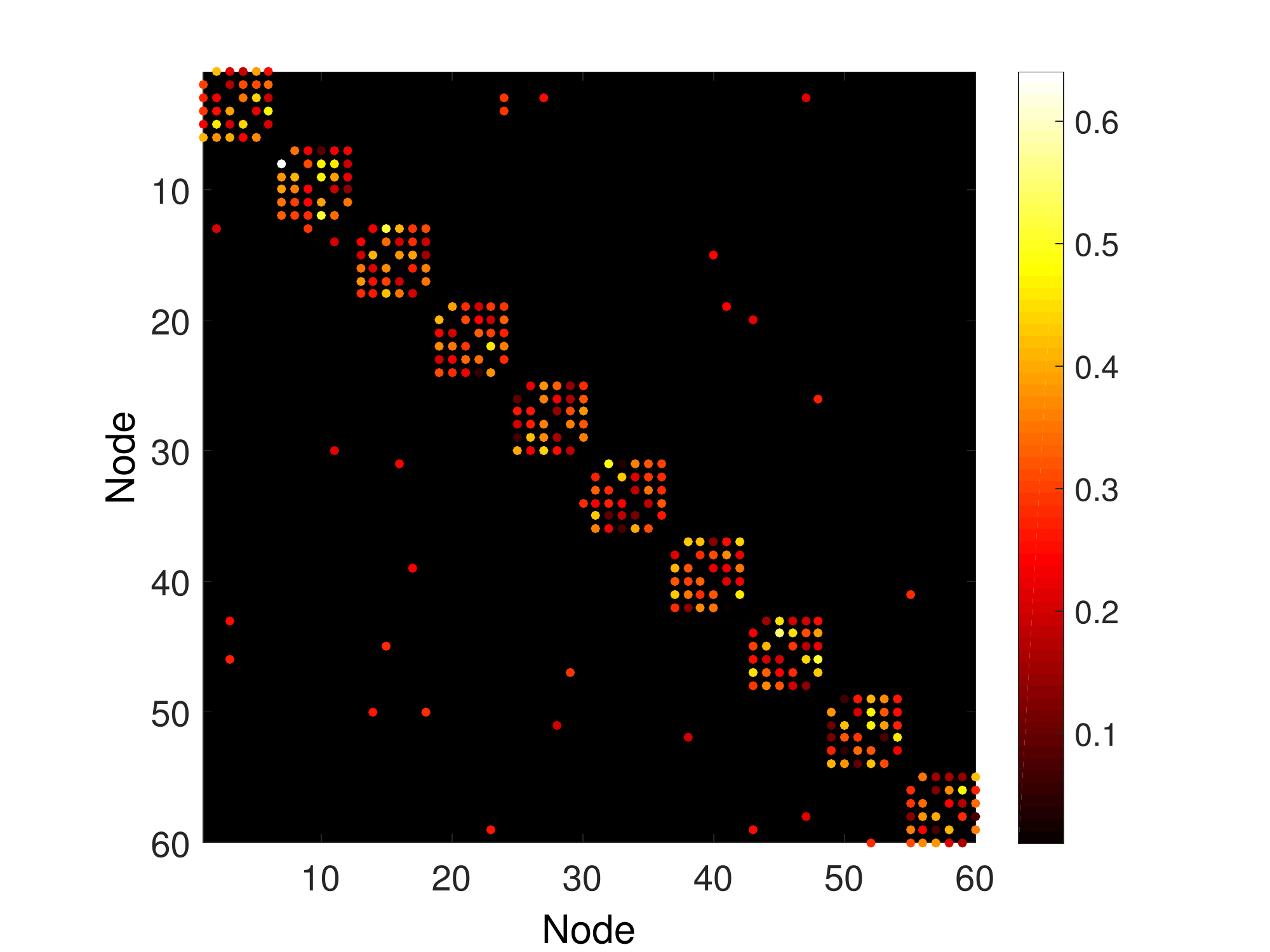}}
		\subfigure[\hspace{-.5in}]{
			\includegraphics[width=2.5in]{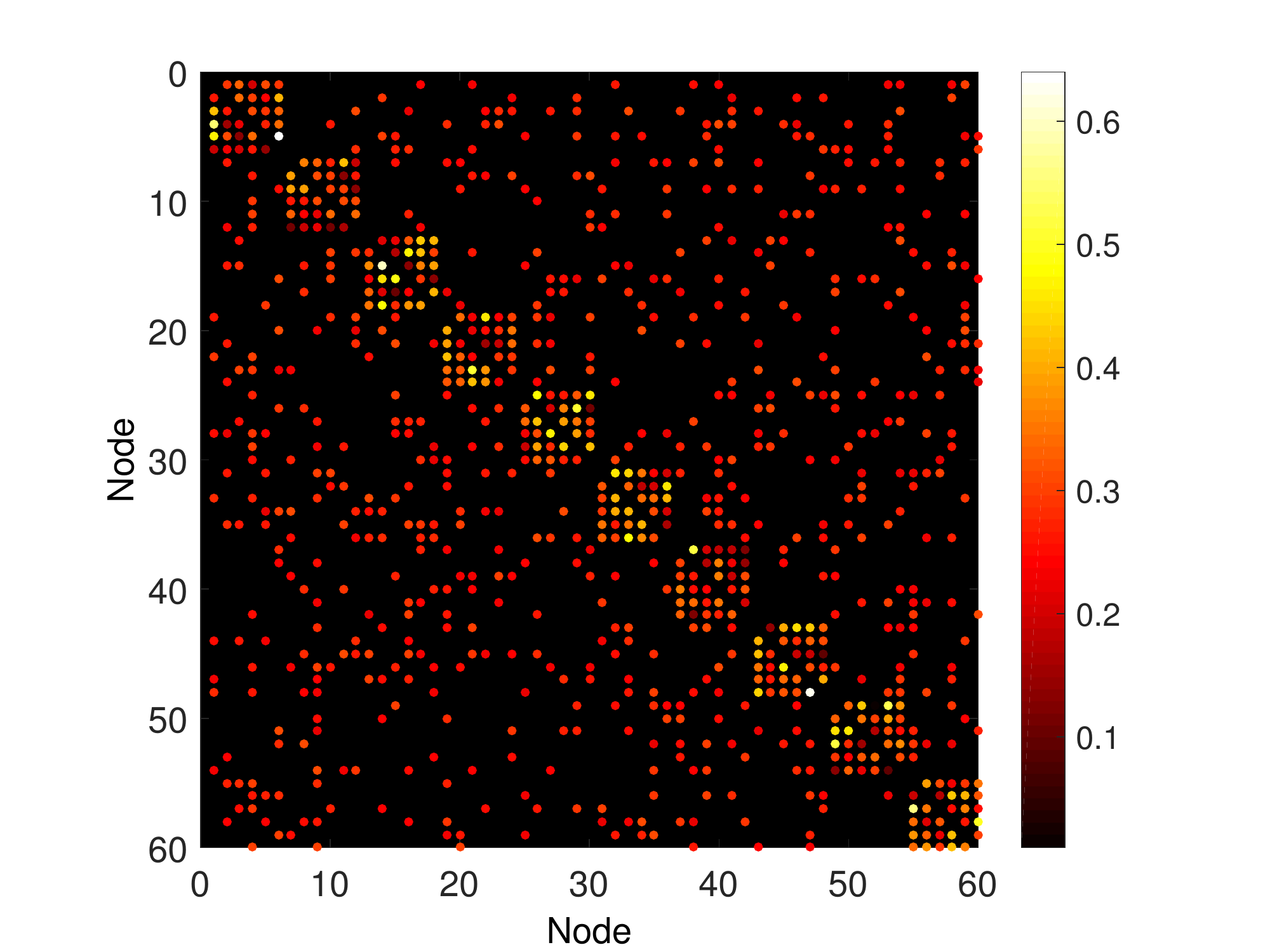}}\\
		\hspace{-.2in}\subfigure[\hspace{-.1in}]{
			\includegraphics[width=2.5in]{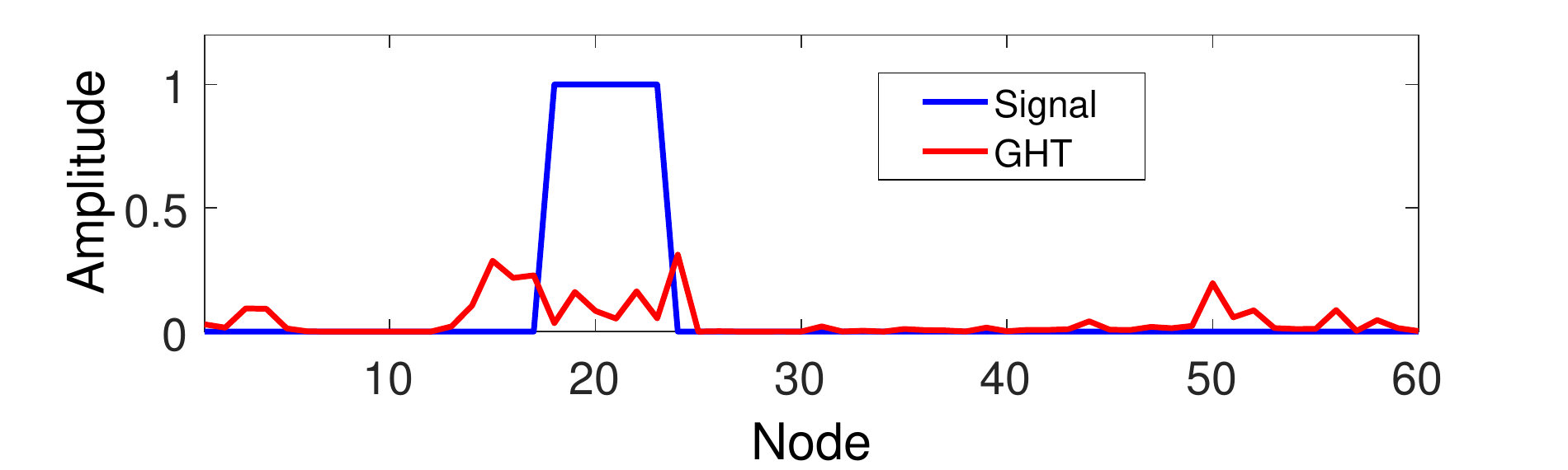}}
		\hspace{-.2in}\subfigure[\hspace{-.1in}]{
			\includegraphics[width=2.5in]{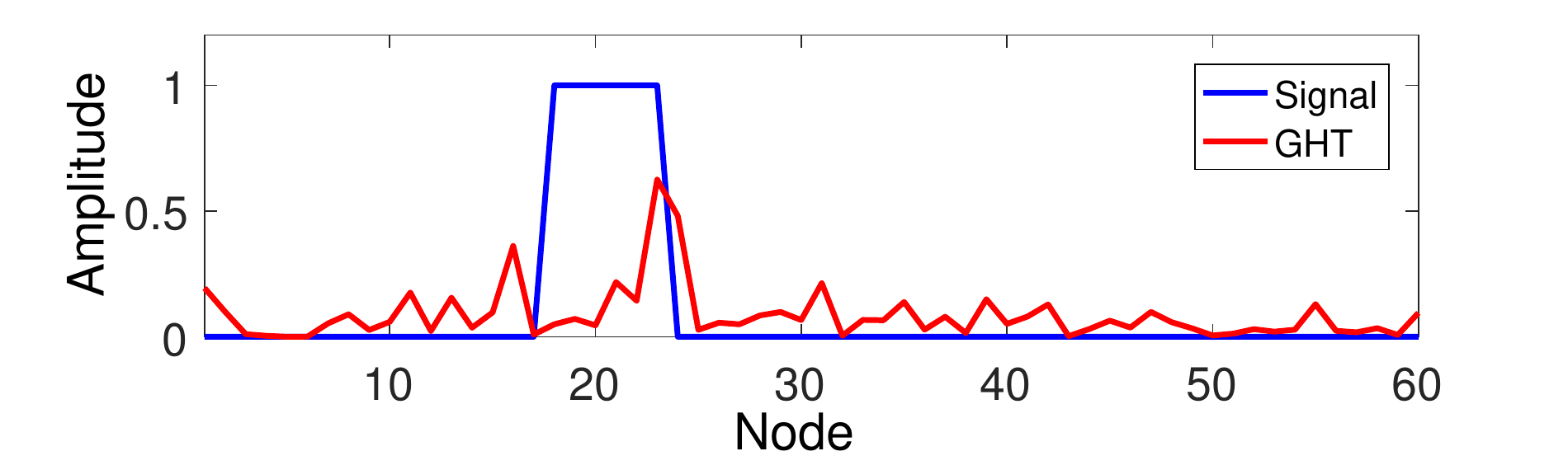}}
		\end{array}$
		\caption{Anomaly highlighting behavior of the graph Hilbert transform for a community graph. (a) $\mathbf{A}$ and (c) Graph Hilbert transform for sparse inter-community connections. (b) $\mathbf{A}$ and (d) graph Hilbert transform for dense inter-community connections.}
		\label{community_edge}
	}
\end{figure}
\begin{figure}[t]
	\vspace{-.15in}
	{\centering
		\subfigure[\hspace{.0in}]{	\hspace{-.5in}
			\includegraphics[width=2.2in]{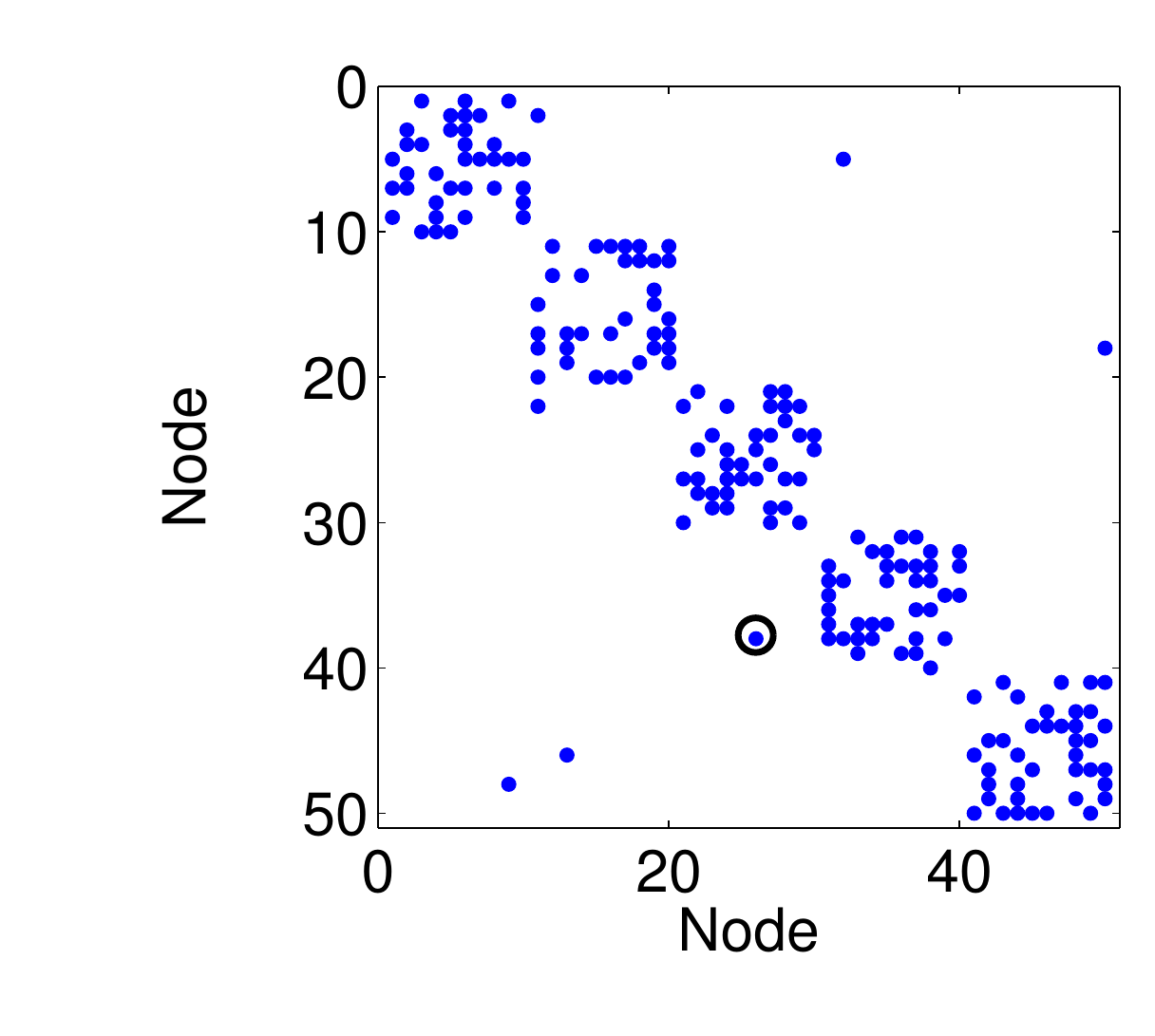}}
		\hspace{-.0in}\subfigure[\hspace{-.1in}]{
			\includegraphics[width=2.in]{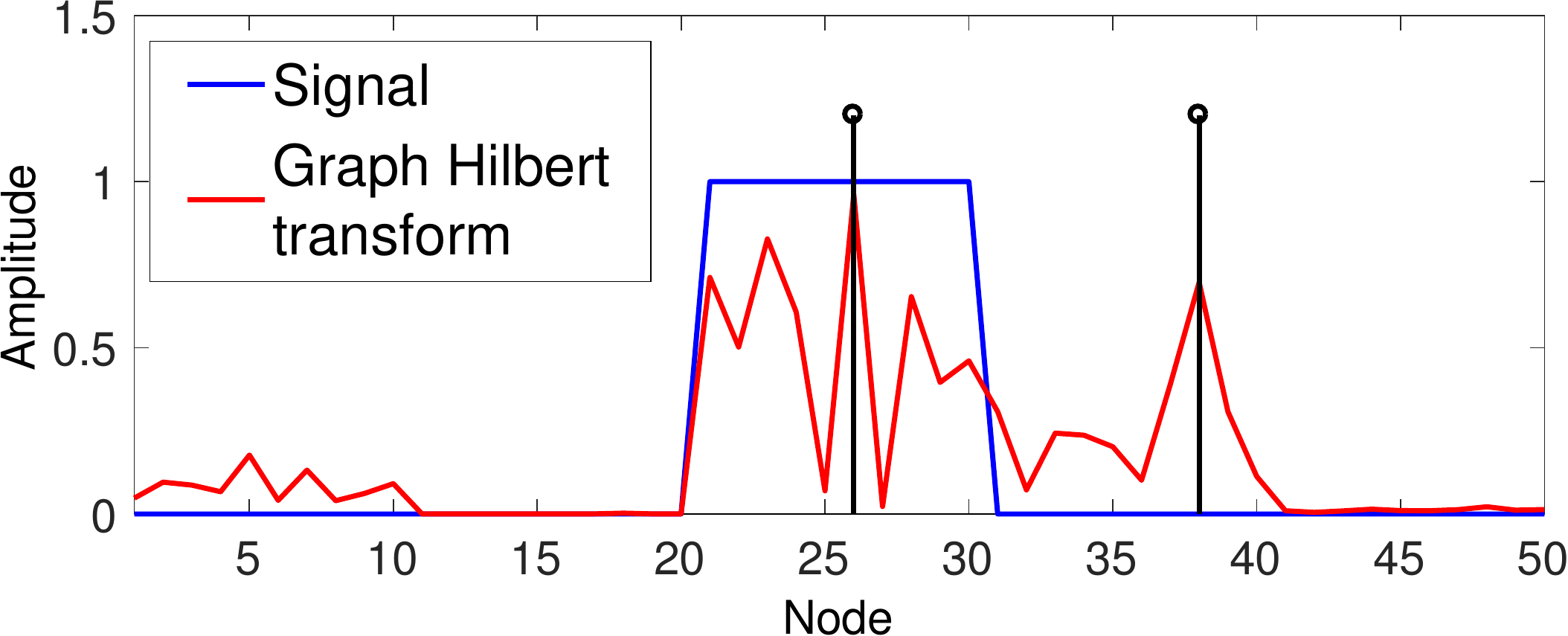}}
		\caption{  Anomaly highlighting behavior of the graph Hilbert transform for unweighted community graph. (a) $\mathbf{A}$ and (b) graph Hilbert transform.
		}
		\label{unweightedGHT}
	}
\end{figure}
In Figure \ref{unweightedGHT}, we consider an unweighted community graph with very few inter community edges. The graph consists of 5 communities of 10 nodes each. Each 10-node community subgraph is randomly generated from the Erd\H{o}s R\'{e}nyi model with an edge probability $p=0.5$. The communities are then connected with very few links also generated randomly. The resulting adjacency matrix is shown in Figure \ref{unweightedGHT}(a). We consider the signal to be all ones corresponding to nodes of community 3. In the present example, community 3 has only one outgoing edge from community 3 (from node 26 to node 38), highlighted by the circle in Figure \ref{unweightedGHT}(a). We observe that the graph Hilbert transform highlights both nodes 38 and 26, as expected, in addition to highlighting the subset of nodes of community 3. 
	
	Our experiments suggest that the graph Hilbert transform could be potentially used in anomaly/edge detection in graphs, particularly because not only the presence but also the location of the anomaly is highlighted.\\

\subsection{Male-female voice classification using graph AM and FM}
 We consider the speech signal as a signal over a graph. Our hypothesis is that viewing the speech signal as a graph signal provides additional information that could help improve the speaker recognition performance. In order to test our hypothesis, we construct a speech graph from learning set data consisting of speech samples from two speakers. We then compute the conventional AM and FM, and the graph AM and FM and use them features for classification. We use two-layer neural network classifiers trained from data distinct from test and learning data. Let $\mathbf{X}_{1}=[\mathbf{x}_{1,1},\cdots,\mathbf{x}_{1,n}]$ and  $\mathbf{X}_{2}=[\mathbf{x}_{2,1},\cdots,\mathbf{x}_{2,n}]$ denote the speech sample matrices from two speakers $S1$ and $S2$ such that $\mathbf{x}_{i,j} \in \mathbb{R}^N$ denotes the $j$th frame of speech samples from $i$th speaker. The speech frames are taken from different sentences uttered by the speakers (one male and other female) from the CMU Arctic Database \cite{CMUArctic}. We choose a frame-length of $N=50$ and total number of frames 4000 (where $n=2000$) such that $\mathbf{X}_l=[\mathbf{X}_1\,\,\mathbf{X}_2]$.
We compute the adjacency matrix by solving the following optimization problem:
\begin{align}
\mathbf{A}^*&=\arg\min_{\mathbf{A}}\|\mathbf{X}_l-\mathbf{AX}_l\|_2^2\nonumber\\
\mbox{subject to}&\quad \mbox{diag}(\mathbf{A})=\mathbf{0},\,\,\mathbf{A 1}=\mathbf{1},\,\, \mathbf{A}^T \mathbf{1}=\mathbf{1}.\qquad
\label{learnA}
\end{align}
We use the constraints $\mathbf{A 1}=\mathbf{1},\quad \mathbf{A}^T \mathbf{1}=\mathbf{1}$ to avoid ill-conditioning in case of insufficient learning data.
In Figure \ref{learnt_graph}, we plot the adjacency matrix obtained from (\ref{learnA}). 
  \begin{figure}[t]
  \vspace{-.15in}
  {\centering
  $
  \begin{array}{cc}
  \hspace{-.0in}\hspace{-.in}
  \subfigure[]{\includegraphics[width=2.in]{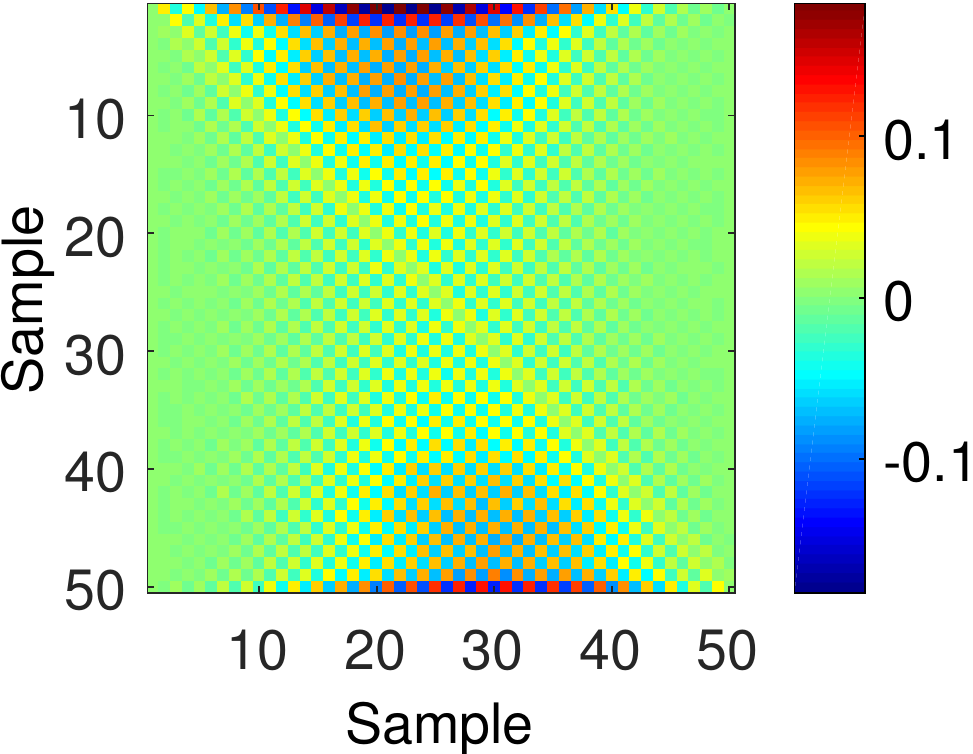}
  }
  \subfigure[]{
  \includegraphics[width=1.8in]{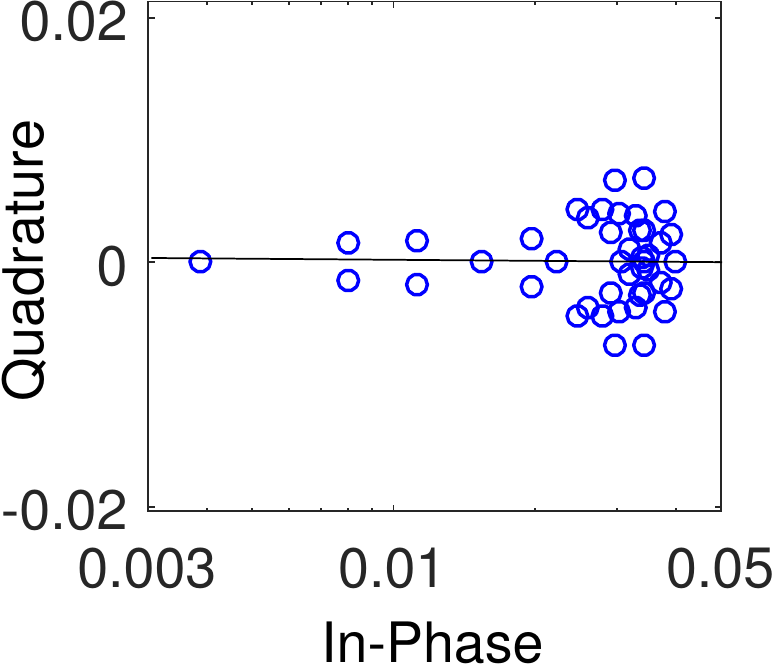}
  }\\
  \hspace{-.0in}
  \subfigure[]{
    \includegraphics[width=2.in]{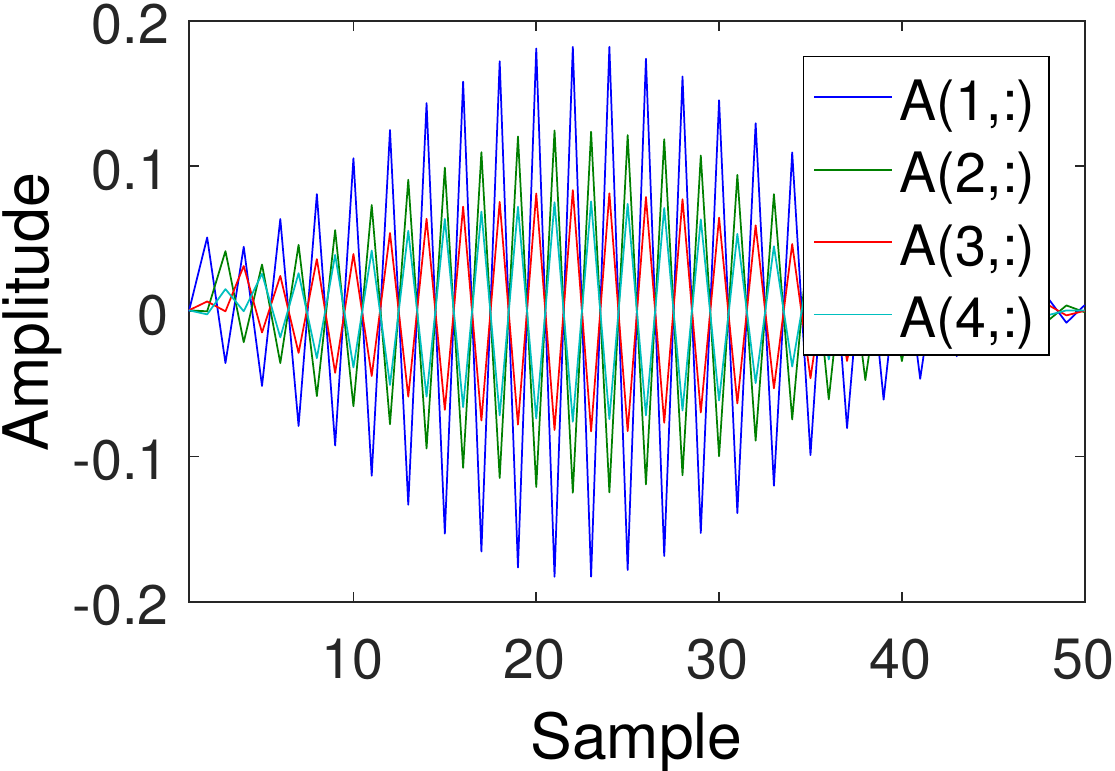}
    }
    \subfigure[]{\includegraphics[width=2.in]{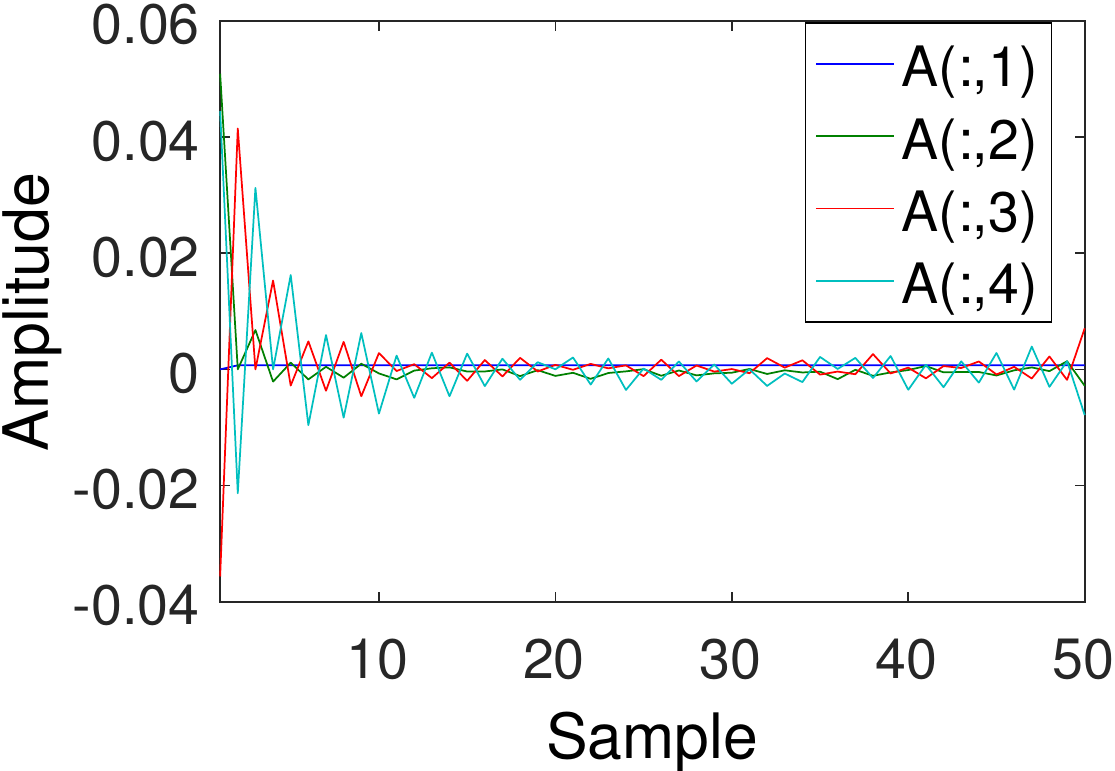}
    }
  \end{array}$ 
    \caption{Graph learnt from speech samples. (a) $\mathbf{A}$, (b) eigenvalues of $\mathbf{A}$, (c) first four rows of $\mathbf{A}$, and (d) first four columns of $\mathbf{A}$.}
  \label{learnt_graph}
  \vspace{-0.in}
  }
\end{figure}
We consider three different classifiers: \newline
 \textbf{Classifier 1}: The classifier uses magnitudes of the DFT coefficients of 1D AM and FM as feature vectors. The feature vector has length 100.
 \newline
 \textbf{Classifier 2}: The classifier uses the magnitudes of the GFT of the graph AM and FM as feature vectors, where the GFT is obtained from the eigen-decomposition of $\mathbf{A}^*$. The feature vector has length 100.
 \newline
 \textbf{Classifier 3}:  The classifier uses the magnitudes of the DFT coefficients of 1D AM and FM, concatenated with magnitudes of the GFT of the graph AM and FM as feature vectors. The length of the feature vector is equal to 200.
\\\newline
The classifiers are trained using the features from training data $\mathbf{X}_{tr}$ and tested on $\mathbf{X}_{test}$, both data sets being different from $\mathbf{X}_l$ used in computing the adjacency matrix. The composite dataset $[\mathbf{X}_{tr} \mathbf{X}_{test}]$ consists of $5\times 10^4$ samples of which 60\% is $\mathbf{X}_{tr}$ and the rest in $\mathbf{X}_{test}$. The classifier performance is computed for different  number of sigmoidal neurons in the hidden layer. The performance is averaged over 50 runs where the training and test data are randomly partitioned. We observe from Table \ref{speaker_classifier_mean} that Classifier 3 outperforms the other classifiers with a classification improvement of up to 2\% in comparison with the DFT-based classifier. We also observe that the performance of the neural classifier saturates after 5 hidden neurons. This shows that the proposed graph amplitude and frequency modulations improve speaker classification performance, and that viewing speech as a graph signal indeed provides complementary information. 
    \begin{table}
    \centering
  \begin{tabular}{|c|c|c|c|}
    \hline
     Number of & Classifier 1& Classifier 2& Classifier 3 \\
     hidden neurons&  (DFT) & (GFT) & (DFT+GFT) \\
    \hline
    1 & 69.4 & 59.6  &  \bf{69.6}\\
    5  &   70.8  &  59.7 &\bf{72.6}\\
    10&  71.0   &     59.3 & \bf{72.2}\\
      \hline
\end{tabular}
  \hspace{.0in}
  \caption{Classification accuracy (in percentage)  obtained over 50 runs of the neural network.}
  \label{speaker_classifier_mean}
  \vspace{-.2in}
\end{table}
 %
\section{Discussions and Conclusions}
We proposed definitions for the analytic signal and Hilbert transform of real graph signals over directed graphs. 
We showed that graph Hilbert transform and graph analytic signal are linear and shift-invariant over graphs, and that they inherit many properties, and in particular anomaly/singularity highlighting property for some graphs of interest. We also demonstrated through a numerical example that the graph Hilbert transform does not inherit the Bedrosian property in general. Using the graph analytic signal, we defined amplitude, phase, and frequency modulations for graph signals. 
In order to illustrate the proposed notions, we considered their application to synthesized and real-world signal examples. We observed in the context of speaker recognition that viewing the speech signal as a graph signal resulted in improved classification performance. This is because the graph signal model captures signal correlation across all samples in a speech frame, unlike the 1D graph which considers only the preceding sample. 
The ability of the graph Hilbert transform to highlight edge/singularities/anomalies could be potentially employed in analyzing scenarios such as malfunctioning in power grids, spread of disease or epidemics, identifying activity sources in the brain, and traffic bottlenecks over transportation networks, and study of outliers in social network trends. We note that applications chosen in this article serve the purpose of illustrating the proposed concepts and are by no means exhaustive. As with 1D modulation analysis, the utility varies across applications and can only be revealed by detailed analysis on various datasets. 


\end{document}